\newcommand{\REM}[1]{}
\newtheorem{invariant}[theorem]{Invariant}
\newenvironment{appendix-theorem}[1]{\vspace{\theorempreskipamount}\noindent{\bf Theorem~#1~} \em }{\vspace{\theorempostskipamount}}
\newenvironment{appendix-lemma}[1]{\vspace{\theorempreskipamount}\noindent{\bf Lemma~#1~} \em }{\vspace{\theorempostskipamount}}
\newenvironment{appendix-corollary}[1]{\vspace{\theorempreskipamount}\noindent{\bf Corollary~#1~} \em }{\vspace{\theorempostskipamount}}
\newenvironment{appendix-invariant}[1]{\vspace{\theorempreskipamount}\noindent{\bf Invariant~#1~} \em }{\vspace{\theorempostskipamount}}
\newcommand {\vstar} {{\nu^*}}
\newcommand {\system}{tuple system }
\newcommand {\systemend}{tuple system}
\newcommand {\weight} {{\bf {w}}}
\newcommand {\maxM} {m^{*}}
\newcommand {\dict} {dict}
\newcommand {\MT} {Marked-Tuples }
\newcommand {\MTend} {Marked-Tuples}
\newcommand{\vone}{\vspace{.1in}}
\title{Decremental All-Pairs ALL Shortest Paths \\ and Betweenness Centrality
\thanks{ This work was supported in part by NSF grant CCF-0830737.
The first author was also supported by CSE/14-15/824/NFIG/MEGA.
The second and third authors were also supported by NSF grant CCF-1320675.}
}
\author {Meghana Nasre \inst{1}, Matteo Pontecorvi \inst{2}, and Vijaya Ramachandran \inst{2}}
\institute{Indian Institute of Technology Madras, India \\
\email{meghana@cse.iitm.ac.in}
\and 
University of Texas at Austin, USA \\ \mailid
}
\begin{document}
\maketitle
\begin{abstract}
We consider the all pairs all shortest paths (APASP)
problem, which maintains the shortest path dag rooted at every
vertex in a directed graph $G=(V,E)$ with
positive edge weights.
For this problem  we present a
decremental algorithm (that supports the deletion of a vertex, or weight increases on edges
incident to a vertex).
Our algorithm  runs in amortized
 $O(\vstar^2 \cdot \log n)$ time per update, where
 $n = |V| $, and $\vstar$ 
bounds the number of edges 
that lie on shortest paths through any given vertex.
Our APASP algorithm can be used for the decremental computation of
betweenness centrality (BC),
a graph parameter that is widely used in the analysis of 
large complex networks.  No nontrivial decremental algorithm for either 
problem was known prior to our work.
Our method is a generalization of the decremental algorithm of Demetrescu and Italiano~\cite{DI04} for 
unique shortest paths,
and for graphs with 
$\vstar = O(n)$, we match the bound in~\cite{DI04}.
Thus for graphs with 
 a constant number of shortest paths between any pair of vertices, 
our algorithm maintains APASP and  BC scores in
amortized time
$O(n^2 \cdot \log n)$ under decremental updates, regardless of the number of edges in the graph.
\vspace{-0.1in}
\end{abstract}

\section{Introduction}

Given a directed graph $G=(V,E)$, with a positive real weight  $\weight(e)$ on each edge $e$,
we consider the problem of maintaining the shortest path 
dag rooted at every
vertex in $V$ (we will refer to these as the {\it SP dags}).
We use the term {\it all-pairs ALL shortest paths (APASP)}
to denote the collection of SP dags rooted at all $v\in V$, since
one can generate all the (up to exponential number of) shortest paths in
$G$ from these dags. These dags give a natural structural property of
$G$ which is of use in any application where several or all shortest
paths need to be examined. A particular application that motivated our
work is the computation of betweenness centrality (BC) scores of
vertices in a graph \cite{Freeman77}.

In this paper we present a decremental 
algorithm for the APASP problem, where each update in $G$
 either deletes 
  or increases the weight of some edges incident on a vertex.
Our method is a generalization of the method developed by
Demetrescu and Italiano~\cite{DI04}  (the `DI' method)
for decremental APSP where only one shortest path is needed.
The DI  decremental
algorithm~\cite{DI04} runs in $O(n^2 \cdot \log n)$ amortized
time per update, for a sufficiently long update sequence.
This decremental algorithm is also extended to a fully dynamic algorithm in~\cite{DI04} 
that runs in $O(n^2 \cdot \log^3 n)$ time, and this
result was improved to $O(n^2 \cdot \log^2 n)$ 
amortized time by Thorup~\cite{Thorup04};
both algorithms have within them essentially the same decremental algorithm.
We briefly discuss the
fully dynamic case, for which  APASP results have been obtained recently by two of the authors,
at the end of the paper.

In \cite{DI04,Thorup04} the goal was to compute all pairs shortest path distances, and hence these 
algorithms preprocess the graph in order to have a unique shortest
path between every pair of vertices. The unique shortest paths assumption, although not restrictive 
in their case, is crucial to the correctness and time complexity of their algorithms. We are interested
in the more general problem of APASP, and this poses several challenges in generalizing the approach
in~\cite{DI04}.

In addition to APASP, our method gives decremental algorithms for the following two problems.

\vone
\noindent
{\bf Locally Shortest Paths (LSPs).} 
For a path $\pi_{xy} \in G$, we define the $\pi_{xy}$ {\it distance}
 from $x$ to
$y$ as $\weight(\pi_{xy}) = \sum_{e \in \pi_{xy}}\weight(e)$, and the
$\pi_{xy}$  {\it length} from $x$ to $y$ as the number of edges 
on $\pi_{xy}$.
 For any $x, y \in V$, $d(x, y)$ denotes the shortest path distance from
$x$ to $y$ in $G$.
A path $\pi_{xy}$ in $G$ is a {\it locally shortest path (LSP)}~\cite{DI04}  if
either $\pi_{xy}$ contains a single vertex, or
every proper subpath of $\pi_{xy}$ is a shortest path in $G$.
As noted in \cite{DI04}, every shortest path
(SP) is an LSP, but an LSP need not be
an SP (e.g., every single edge is an LSP).

The DI method maintains all LSPs in a graph 
with unique shortest paths, and these are key to efficiently maintaining
shortest paths under decremental and fully dynamic updates. 
The decremental method we present here maintains 
all LSPs for all (multiple) shortest paths in a graph,
using a compact {\it tuple} representation.

\vone
\noindent
{\bf Betweenness Centrality (BC).}
Betweenness centrality is a widely-used measure
in the analysis of large complex networks, and is defined as follows.
For any pair $x, y$ in $V$, let $\sigma_{xy}$ denote the number of shortest paths
from $x$ to $y$ in $G$, and let
 $\sigma_{xy}(v)$ denote the number of shortest
paths from $x$ to $y$ in $G$ that pass through $v$. Then, $BC(v) = \sum_{s \neq v, t \neq v} \frac{\sigma_{st}(v)}{\sigma_{st}}$.
This measure is often used as an index that  determines the relative importance of
$v$ in the network.
Some applications of BC
include analyzing social interaction networks \cite{KA12},
identifying lethality in biological networks \cite{PCW05},
and identifying key actors in terrorist networks \cite{Coffman,Krebs02}.
Heuristics for dynamic betweenness centrality with good experimental
performance are given in~\cite{GreenMB12,Lee12,SinghGIS13},
but none of these algorithms
provably improve on the widely used static algorithm by
Brandes~\cite{Brandes01}, which runs in $O(mn + n^2 \cdot \log n)$ time
on any class of graphs, where $m=|E|$.

Recently, 
the authors gave a simple incremental BC algorithm \cite{NPR14}, that provably improves on Brandes' on sparse graphs, and
also typically improves on Brandes' in dense graphs (e.g., in the setting of
Theorem \ref{th:main2} below). In this paper, we complement the results in \cite{NPR14};
however, decremental updates are considerably more challenging (similar to APSP, as
noted in \cite{DI04}).

The key step in  the recent incremental BC
algorithm~\cite{NPR14} is the incremental maintenance of the APASP dags
(achieved  there using techniques unrelated to the current paper).
After the updated dags are obtained, the BC scores
can be computed in time linear in the combined sizes of the APASP dags
(plus $O(n^2)$).
Thus, if we instead use our decremental APASP algorithm in 
the key step in~\cite{NPR14}, we obtain
a decremental algorithm for 
BC with the same 
bound as APASP.

\vone
\noindent
{\bf Our Results.}
 Let $\vstar$ be the maximum
number of edges that lie on shortest paths through any given vertex in $G$;
thus, $\vstar$ also bounds the number of edges
that lie on any single-source shortest path dag.
Let $m^*$ be the number of edges in $G$ that lie on shortest paths
(see, e.g., Karger et al.~\cite{KKP93}). 
Our main result is the following theorem, where we have assumed
that $\vstar=\Omega (n)$.

\begin{theorem}\label{th:main}
Let $\Sigma$ be a sequence of decremental updates on $G=(V,E)$. Then,
all SP dags, all LSPs,  and all BC scores can be
maintained in amortized time $O( \vstar^2 \cdot \log {n})$ per update 
when $|\Sigma| = \Omega (m^*/\vstar)$.
\end{theorem}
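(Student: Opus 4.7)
The plan is to generalize the Demetrescu--Italiano (DI) decremental method, which works for unique shortest paths, so that it maintains all LSPs under multiple shortest paths via a compact tuple representation, and then to read off BC scores from the resulting SP dags. An LSP with first edge $(x,b)$ and last edge $(a,y)$ will be represented by a single tuple $\tau=(x,b,a,y)$ together with its weight $\weight(x,b)+d(b,a)+\weight(a,y)$; this tuple compactly stands for all LSPs from $x$ to $y$ that begin with $(x,b)$ and end with $(a,y)$, of which there may be exponentially many. The data structure will keep, for each vertex $v$, the tuples whose LSP passes through $v$, and a global priority queue that supports DI-style processing of candidate LSPs in nondecreasing weight order. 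Alongside, we maintain the current distance $d(x,y)$ and shortest-path count $\sigma_{xy}$ for every ordered pair.

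For a decremental update centered at a vertex $v$, the algorithm proceeds in three phases. First, a \emph{cleanup} phase removes every tuple whose LSP touches $v$ or an affected edge, scanning the per-vertex index at $v$ and the pieces of the SP dags it touches. Second, a \emph{fixup} phase uses the priority queue, in the DI style, to promote surviving tuples into LSPs of the updated graph and to generate new candidates by one-edge extensions on either side; when an LSP is confirmed to be shortest, the corresponding $d$- and $\sigma$-values are updated. Third, the SP dags are re-linked from the confirmed tuples, and BC scores are recomputed by running the standard Brandes accumulation on each updated dag, at cost $O(n\cdot\vstar)$ per update, which is subsumed by the APASP bound.

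The amortization follows the DI potential-style argument, charging $O(\log n)$ to each tuple at its creation and at its destruction (for priority-queue and index work). The structural claim is that an update at $v$ can create at most $O(\vstar^{2})$ new tuples, since any new LSP must have a first and last edge drawn from the at most $\vstar$ shortest-path edges incident to the affected region around $v$, and these edge pairs determine the tuples. Initialization creates $O(n\cdot m^{*})$ tuples, so the hypothesis $|\Sigma|=\Omega(m^{*}/\vstar)$ absorbs this startup into an $O(\vstar^{2})$ amortized count of tuple events per update, yielding the claimed $O(\vstar^{2}\log n)$ bound.

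The main obstacle I anticipate is in the fixup phase. When shortest paths are not unique, removing a single vertex can decrease $\sigma_{xy}$ for many pairs without changing $d(x,y)$, and conversely, LSPs that were strictly longer can become shortest once a competing SP disappears. The tuple system must therefore be engineered so that (i) purely numerical changes to $\sigma$ are localized to tuples actually touching $v$ and handled without revisiting unaffected tuples, and (ii) the priority queue processes each tuple a bounded number of times per update. Achieving both while preserving the DI-style amortization---where each tuple is paid for only at creation and destruction---is the technically delicate point, and it is where the APASP generalization must depart substantially from the unique-SP analysis of \cite{DI04}.
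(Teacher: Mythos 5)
Your overall architecture (first-edge/last-edge tuples with counts, a cleanup of everything through $v$, a DI-style fixup by weight order, then accumulation over the updated dags for BC) is the same as the paper's. However, there are two genuine gaps. First, your central structural claim --- that a single update creates at most $O(\vstar^2)$ new tuples because any new LSP must have first and last edges among the shortest-path edges near $v$ --- is false. A new LST need not touch $v$ at all: a path far from $v$ can become locally shortest merely because its proper subpaths became shortest once competing paths through $v$ lengthened, and a single update can create up to $\Theta(m^*\cdot\vstar)$ new tuples. What is true (Lemma~\ref{lem:bound-tuples-thru-v}) is that at most $O(\vstar^2)$ tuples \emph{contain} $v$, hence at most $O(\vstar^2)$ tuples are \emph{destroyed} per cleanup. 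The correct amortization therefore runs in the opposite direction from yours: total creations over the sequence are bounded by total destructions plus the $O(m^*\cdot\vstar)$ tuples surviving at the end (Lemma~\ref{lem:total-locally-shortest}), giving $O(\sigma\cdot\vstar^2+m^*\cdot\vstar)$ creations over $\sigma$ updates, which is $O(\vstar^2)$ amortized precisely when $\sigma=\Omega(m^*/\vstar)$. The hypothesis on $|\Sigma|$ enters through this creation-versus-destruction accounting, not merely to absorb initialization cost.

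Second, you correctly identify the delicate point --- localizing count changes and bounding how often each tuple is touched during fixup --- but you leave it unresolved, and it does not follow from the DI machinery. The paper needs three devices you do not supply: (i) \emph{accumulation}, processing all extracted triples with the same key $[wt,x,y]$ and the same last edge $(b,y)$ together, so that each left extension $(x'x,by)$ is generated exactly once with the summed count; (ii) the \MT dictionary, which prevents the same tuple from being generated twice, once as a left extension and once as a right extension --- harmless under unique shortest paths but fatal to the counts here; and (iii) inserting only \emph{one} candidate min-weight triple per pair into $H_f$, then, when $P^*(x,y)$ is found empty, rescanning all min-weight triples of $P(x,y)$, together with the Type-0 through Type-4 charging scheme that pays for re-examining old LSTs that become STs by charging them back to their creation. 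Without these, either the correctness of the counts or the $O(\vstar^2\cdot\log n)$ bound fails, so the proof as proposed is incomplete at exactly the step where the generalization from unique shortest paths is nontrivial.
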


\noindent
{\bf Discussion of the Parameters.}
As  noted in \cite{KKP93}, it
is well-known that $m^* = O(n \log{n})$
with high probability in a complete graph where edge weights are chosen from
a large class of probability distributions.
Since $\vstar \leq m^*$,
our algorithms will have an amortized bound of  $O(n^2 \cdot \log^3 n)$ on
such graphs. Also,
$\vstar =O(n)$ in any graph with only a constant number of shortest paths between every
pair of vertices, even though $m^*$ can be $\Theta (n^2)$ in the worst case
even in graphs with unique shortest paths.
In fact $\vstar = O(n)$ 
in some  graphs that
have an exponential number of shortest paths between
some pairs of vertices.
In all such cases, and more generally, when the number of edges on shortest paths through
any single vertex is $O(n)$, our algorithm
will run in amortized $O(n^2 \cdot \log n)$
time per decremental update.
Thus we have:
\begin{theorem}\label{th:main2}
Let $\Sigma$ be a sequence of decremental updates on graphs where the number
of edges on shortest paths through any single vertex is $O(n)$. 
Then, all SP dags, all  LSPs, and all  BC scores can be
maintained in amortized time $O( n^2 \cdot \log {n})$ per update
when $|\Sigma| = \Omega (m^*/n)$.
\end{theorem}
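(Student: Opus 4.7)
The plan is to obtain Theorem~\ref{th:main2} as an immediate specialization of Theorem~\ref{th:main}. The parameter $\vstar$ was defined earlier in the paper as the maximum number of edges that lie on shortest paths through any single vertex, so the hypothesis of Theorem~\ref{th:main2} — that the number of edges on shortest paths through any single vertex is $O(n)$ — is exactly the statement $\vstar = O(n)$. Substituting this into the bounds of Theorem~\ref{th:main}, the per-update amortized cost $O(\vstar^2 \cdot \log n)$ becomes $O(n^2 \cdot \log n)$, and the length requirement $|\Sigma| = \Omega(m^*/\vstar)$ becomes $|\Sigma| = \Omega(m^*/n)$, which is precisely what is claimed.

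The only minor point to address is that Theorem~\ref{th:main} is stated under the mild assumption $\vstar = \Omega(n)$, while Theorem~\ref{th:main2} in principle permits graphs in which $\vstar$ is strictly smaller than $n$. This is not a real obstacle: our decremental algorithm runs correctly regardless of the relationship between $\vstar$ and $n$, and if $\vstar < n$ we simply use the trivial upper bound $\vstar^2 \le n^2$ in the final amortized estimate to recover the claimed $O(n^2 \cdot \log n)$ cost per update. Similarly, a smaller $\vstar$ only increases the threshold $m^*/\vstar$, so any $\Sigma$ satisfying $|\Sigma| = \Omega(m^*/n)$ also suffices once we drop to the coarser $\vstar \le n$ bookkeeping. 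All the correctness and amortization work is already done in Theorem~\ref{th:main}; the role of Theorem~\ref{th:main2} is to highlight the natural class of inputs — including graphs with a constant number of shortest paths per pair, and more generally graphs whose single-source SP dags have $O(n)$ edges — where our algorithm matches the DI bound of $O(n^2 \cdot \log n)$ even in dense graphs or graphs with exponentially many shortest paths.
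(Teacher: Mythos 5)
Your proposal is correct and matches the paper's own reasoning: Theorem~\ref{th:main2} is obtained in the text as an immediate specialization of Theorem~\ref{th:main} by substituting $\vstar = O(n)$ into both the amortized bound and the sequence-length requirement. Your additional remark about the $\vstar = \Omega(n)$ assumption is a sensible clarification but does not change the argument.
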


\begin{corollary}
If the number of shortest paths for any vertex pair is bounded by a constant, then
decremental APASP, LSPs, and BC have
 amortized cost $O(n^2 \cdot \log n)$
per update when the update sequence has length $\Omega (m^*/n)$.
\end{corollary}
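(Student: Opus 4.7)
The plan is to derive the corollary directly from Theorem~\ref{th:main2} by showing that a constant bound on the number of shortest paths per pair forces the hypothesis that the number of edges on shortest paths through any single vertex is $O(n)$. I would fix an arbitrary vertex $v \in V$ and bound the number of such edges.

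The first step will be the structural observation that every edge on a shortest path through $v$ lies in $D_v^{in} \cup D_v^{out}$, where $D_v^{in}$ (resp.\ $D_v^{out}$) is the set of edges appearing on some shortest path ending at (resp.\ starting at) $v$. This decomposition is immediate, since any shortest path through $v$ splits at $v$ into a shortest path into $v$ followed by a shortest path out of $v$, and the corresponding prefix and suffix are themselves shortest paths in $G$.

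The second step is a short counting argument to show $|D_v^{in}|, |D_v^{out}| = O(n)$. For each vertex $u \neq v$, an edge $(u,w) \in D_v^{in}$ is the first edge of some shortest $u$-to-$v$ path, and distinct such first edges yield distinct shortest $u$-to-$v$ paths. Under the hypothesis $\sigma_{uv} \leq c$ for a constant $c$, the out-degree of $u$ in $D_v^{in}$ is at most $c$, so $|D_v^{in}| \leq c(n-1) = O(n)$. A symmetric argument bounding the in-degree at each $w \neq v$ in $D_v^{out}$ via $\sigma_{vw} \leq c$ gives $|D_v^{out}| = O(n)$.

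Combining these bounds uniformly over $v$ shows that the number of edges on shortest paths through any single vertex is $O(n)$, so the hypothesis of Theorem~\ref{th:main2} is satisfied, and the $O(n^2 \cdot \log n)$ amortized bound for decremental APASP, LSPs, and BC follows, with the same update-sequence length condition $|\Sigma| = \Omega(m^*/n)$. I do not anticipate any real obstacle here: the content of the corollary is essentially the elementary edge-counting above combined with the main theorem.
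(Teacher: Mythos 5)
Your proposal is correct and follows the same route the paper (implicitly) takes: the corollary is obtained from Theorem~\ref{th:main2} by observing that a constant bound on $\sigma_{xy}$ for every pair forces $\vstar = O(n)$, which the paper asserts in its ``Discussion of the Parameters'' and which your decomposition into $D_v^{in} \cup D_v^{out}$ plus the first-edge/last-edge counting argument proves cleanly. No gaps.
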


\begin{figure}[htbp]
\begin{minipage}{0.25\linewidth}
\centering
\begin{tikzpicture}[every node/.style={circle, draw, inner sep=0pt, minimum width=5pt}]
\node (x1)[label=above:$x'$] at (0,1)  {};
\node (x)[label=left:$x$] at (0,0.2)  {};
\node (a1)[label=left:$a_1$] at (-1,-0.6) {};
\node (a2)[label=right:$a_2$] at (0,-0.6) {};
\node (a3)[label=right:$a_3$] at (1,-0.6) {};
\node (v)[label=right:$v$] at (0,-1.4)  {};
\node (v1)[label=left:$v_1$] at (-1.5,-1.4)  {};
\node (v2)[label=right:$v_2$] at (1.5,-1.4)  {};
\node (b1)[label=below left:$b_1$] at (-0.5,-2.2) {};
\node (b2)[label=below right:$b$] at (0.5,-2.2) {};
\node (y)[label=below:$y$] at (0.5,-3) {};
\node (y1)[label=below:$y_1$] at (-0.5,-3) {};
\path[every node/.style={font=\sffamily\small}]
    (a1) edge node [left] {\textbf{\scriptsize{2}}} (v1)
    (a1) edge node [right] {\textbf{\scriptsize{4}}} (b1)
    (v1) edge node [left] {\textbf{\scriptsize{2}}} (b1);
\draw[->] (x1) -- (x);
\draw[->] (x) -- (a1);
\draw[->] (x) -- (a2);
\draw[->] (x) -- (a3);
\draw[->] (a1) -- (v1) ;
\draw[->] (a1) -- (b1) ;
\draw[->] (a1) -- (v);
\draw[->] (a2) -- (v);
\draw[->] (a2) -- (v2);
\draw[->] (a3) -- (v2);
\draw[->] (v1) -- (b1);
\draw[->] (v) -- (b1);
\draw[->] (v) -- (b2);
\draw[->] (v2) -- (b2);
\draw[->] (b1) -- (y1);
\draw[->] (b2) -- (y);
\end{tikzpicture}
\caption{Graph $G$}
\label{fig:illust-example}
\end{minipage}
\begin{minipage}{0.75\linewidth}
\centering
\renewcommand{\arraystretch}{1.3}
\begin{tabular}{|c| c|}
     \hline
    {Set} & {$G$ (before update on $v$)} \\
 \hline \hline
$P(x,y)$ & ${\{((xa_1,by),4,1), ((xa_2,by),4,2), }$ \\
$= P^*(x,y)$& $((xa_3, by), 4, 1) \}$  \\
\hline
$P(x,b_1)$ & $\{(xa_1,vb_1),3,1),((xa_2,vb_1),3,1)\}$ \\
\hline
$P^*(x,b_1)$ & $\{((xa_1,vb_1),3,1),((xa_2,vb_1),3,1)\}$ \\
\hline
$L^*(v,y_1)$ & $\{a_1,a_2\}$ \\
\hline
$L(v,b_1y_1)$ & $\{a_1,a_2\}$ \\
\hline
$R^*(x,v)$ & $\{b,b_1\}$ \\
\hline
$R(xa_2,v)$ & $\{b,b_1\}$ \\
\hline
\end{tabular}
\caption{A subset of the tuple-system for $G$ in Fig.~\ref{fig:illust-example}}
\label{fig:example-table}
\end{minipage}
\end{figure}

\vone
\noindent{\bf The  DI method.}  
Here we will use an example to  
give a quick review of the DI approach~\cite{DI04}, which forms the basis
for our method.
Consider the graph $G$ in Fig.~\ref{fig:illust-example},
where all edges have weight 1 except for the ones with explicit weights. 

As in DI, let us assume here that
$G$ has been  pre-processed to identify
a unique shortest path between every pair of vertices. In $G$ 
 the shortest path from $a_1$ to $b_1$ is $\langle a_1, v, b_1\rangle$ and
has weight $2$, and by definition, the paths $p_1 = \langle a_1, b_1\rangle$
and $p_2 = \langle a_1, v_1, b_1 \rangle$ of weight $4$ are both LSPs. Now consider
a decremental update on $v$ that increases  $\weight (a_1, v)$ to 10 and $\weight (a_2,v)$ to 5, and let $G'$ be  the resulting graph (see Fig.~\ref{fig:illust-example2}). In $G'$
both  $p_1$ and $p_2$ become shortest paths.
Furthermore, a {\em left extension} of the path $p_1$, namely $p_3 = \langle x, a_1, b_1\rangle $ becomes
a shortest path from $x$ to $b_1$ in $G'$. Note that the path $p_3$ is not even an LSP
in the graph $G$; however, it is obtained as a left extension of a path 
that has become shortest after the update.

The elegant method
of storing LSPs and creating longer LSPs by left and right extending shortest paths is the basis of the DI approach~\cite{DI04}. 
To achieve this, the DI approach uses a succinct representation of SPs, LSPs and their left and right extensions 
using suitable data structures. It then
uses a procedure {\em cleanup} to remove from the data structures all the shortest paths and LSPs that contain the updated
vertex $v$, and a complementary
procedure {\em fixup} that first adds all the trivial LSPs (corresponding to edges incident on $v$),
and then restores the shortest paths and LSPs between all pairs of vertices.
The DI approach thus efficiently maintains a single shortest
path between all pairs of vertices under decremental updates. 

\vone
\noindent
{\bf Roadmap.}
In this paper we are interested in maintaining {\em all}  shortest paths 
for all vertex pairs and this requires several enhancements to the 
DI method~\cite{DI04}.
In Section~\ref{sec:tuple-system} we present  a  new {\it tuple system} which succinctly represents all LSPs in a graph with multiple
shortest paths
and in Section~\ref{sec:alg} we present our decremental algorithm for maintaining this tuple system,
and hence for maintaining APASP and BC scores.
\section{A System of Tuples}
\label{sec:tuple-system}
In this section we present an efficient representation of the set of
SPs and LSPs for an edge weighted graph $G = (V, E)$. We first define the
notions of {\em tuple} and {\em triple}.

\vone
\noindent
{\bf Tuple.}
A tuple, $\tau = (xa, by)$, represents the set of LSPs in $G$, all 
of which  use
the same first edge $(x,a)$ and the same last edge $(b,y)$. 
The weight of every path
represented by $\tau$ is $\weight(x, a)$ + $d(a, b) + \weight(b,y)$. 
We call $\tau$ a {\it locally shortest path tuple (LST)}.
In addition, if $d(x, y) = \weight(x, a) + d(a, b) + \weight(b, y)$, then 
$\tau$ is a {\it shortest path tuple (ST)}.
Fig.~\ref{fig:tau} shows a tuple $\tau$.

\vone
\noindent
{\bf Triple.}
A triple $\gamma=(\tau, wt, count)$, represents the tuple $\tau=(xa,by)$ that contains
$count > 0$ number of paths from $x$ to $y$, each with weight
$wt$. In Fig.~\ref{fig:illust-example}, the triple $((xa_2,by),4,2)$ represents two paths from $x$ to $y$, namely
$p_1 = \langle x, a_2, v, b, y \rangle$ and $p_2 = \langle x, a_2, v_2, b, y \rangle$ both having weight $4$.

\vone
\noindent
{\bf Storing Locally Shortest Paths.}
\label{sec:storing-LST}
We use triples to succinctly store all LSPs and SPs
for each vertex pair  in $G$. For $x, y \in V$, we define:
\begin{eqnarray*}
P(x,y) &=& \{\mbox{$((xa, by), wt, count)$: $(xa, by)$ is an LST from $x$ to $y$ in $G$} \} \\
P^{*}(x,y) &=& \{\mbox{$((xa, by), wt, count)$: $(xa, by)$ is an ST from $x$ to $y$ in $G$} \}.
\end{eqnarray*}
Note that all triples in $P^{*}(x,y)$ have the same weight.  We will
use the term LST to denote 
either a locally shortest tuple or a triple representing 
a set of LSPs,
and it will be clear from the context whether we mean a triple or a tuple.

\begin{figure}[ht]
\begin{minipage}{0.35\linewidth}
\centering
\begin{tikzpicture}[every node/.style={circle, draw, inner sep=0pt, minimum width=5pt}]
\node (x1)[label=above:$x'$] at (0,1)  {};
\node (x)[label=left:$x$] at (0,0.2)  {};
\node (a1)[label=left:$a_1$] at (-1,-0.6) {};
\node (a2)[label=right:$a_2$] at (0,-0.6) {};
\node (a3)[label=right:$a_3$] at (1,-0.6) {};
\node (v)[label=right:$v$] at (0,-1.4)  {};
\node (v1)[label=left:$v_1$] at (-1.5,-1.4)  {};
\node (v2)[label=right:$v_2$] at (1.5,-1.4)  {};
\node (b1)[label=below left:$b_1$] at (-0.5,-2.2) {};
\node (b2)[label=below right:$b$] at (0.5,-2.2) {};
\node (y)[label=below:$y$] at (0.5,-3) {};
\node (y1)[label=below:$y_1$] at (-0.5,-3) {};
\path[every node/.style={font=\sffamily\small}]
    (a1) edge node [left] {\textbf{\scriptsize{2}}} (v1)
    (a1) edge node [right] {\textbf{\scriptsize{4}}} (b1)
    (v1) edge node [left] {\textbf{\scriptsize{2}}} (b1)
    (a1) edge node [right] {\textbf{\scriptsize{10}}} (v)
    (a2) edge node [right] {\textbf{\scriptsize{5}}} (v);
\draw[->] (x1) -- (x);
\draw[->] (x) -- (a1);
\draw[->] (x) -- (a2);
\draw[->] (x) -- (a3);
\draw[->] (a1) -- (v1) ;
\draw[->] (a1) -- (b1) ;
\draw[->] (a1) -- (v);
\draw[->] (a2) -- (v);
\draw[->] (a2) -- (v2);
\draw[->] (a3) -- (v2);
\draw[->] (v1) -- (b1);
\draw[->] (v) -- (b1);
\draw[->] (v) -- (b2);
\draw[->] (v2) -- (b2);
\draw[->] (b1) -- (y1);
\draw[->] (b2) -- (y);
\end{tikzpicture}
\caption{Graph $G'$}
\label{fig:illust-example2}
\end{minipage}
\begin{minipage}{0.6\linewidth}

\centering
\renewcommand{\arraystretch}{1.3}
\begin{tabular}{|c| c|}
     \hline
    {Set}  &  {$G'$ (with $\weight(a_1, v) = 10$, $\weight (a_2, v) = 5$)} \\
\hline \hline
$P(x,y)$  & $\{((xa_2,by),4,1), ((xa_3,by),4,1)\}$ \\
$= P^*(x,y)$& \\
\hline
$P(x,b_1)$ &  $ \{((xa_1,v_1b_1),5,1),  ((xa_2,vb_1),7,1),$\\
&  $((xa_1,a_1b_1),5,1) \}$ \\
\hline
$P^*(x,b_1)$ &  $\{((xa_1,v_1b_1),5,1),((xa_1,a_1b_1),5,1)\}$ \\
\hline
$L^*(v,y_1)$ &   $\{a_2\}$ \\
\hline
$L(v,b_1y_1)$ &  $\{a_2\}$\\
\hline
$R^*(x,v)$ &   $\emptyset$ \\
\hline
$R(xa_2,v)$ &  $ \{b_1\}$ \\
\hline
\end{tabular}
\vspace{-0.08in}
\caption{A subset of the tuple-system for  $G'$}
\label{fig:example-table2}
\end{minipage}
\vspace{-0.2in}
\end{figure}

\vone
\noindent
{\bf Left Tuple and Right Tuple.}
A left tuple (or $\ell$-tuple), 
$\tau_{\ell} = (xa, y)$,
represents the
set of LSPs from $x$ to $y$, all of which use 
the same first edge $(x,a)$.
The weight of every path
represented by $\tau_{\ell}$ is $\weight(x, a)$ + $d(a, y)$. 
If $d(x, y) = \weight(x, a) + d(a, y)$, then $\tau_{\ell}$ represents the
set of shortest paths from $x$ to $y$,
all of which use the first edge $(x, a)$. 
A right tuple ($r$-tuple) $\tau_r = (x, by)$ is defined analogously.
Fig.~\ref{fig:taul} and 
Fig.~\ref{fig:taur} show a left tuple and a right tuple respectively.
In the following, we will say that a tuple (or $\ell$-tuple or $r$-tuple)
{\it contains} a vertex $v$, if at least one of the paths represented by the
tuple contains $v$.
For instance, in Fig.~\ref{fig:illust-example}, the tuple $(xa_2, by)$ 
contains the vertex $v$ as well as the vertex $v_2$.

\begin{figure}
\centering
\subfigure[tuple  $\tau = (xa,by)$]{
\makebox[.3\textwidth]{
\begin{tikzpicture}[every node/.style={circle, draw, inner sep=0pt, minimum width=5pt}]
\node (x)[label=above:$x$] at (0,0)  {};
\node (a)[label=left:$a$] at (0,-0.8) {};
\node (b)[label=below left:$b$] at (0,-2.2) {};
\node (y)[label=below:$y$] at (0,-3) {}; 
\draw[->] (x) -- (a);
\path[->,decoration={snake}] { (a) edge[decorate] (b)};
\path[->,decoration={snake}] { (a) edge[bend right=60,decorate] (b.west)};
\path[->,decoration={snake}] { (a) edge[bend left=60,decorate] (b.east)};
\draw[->] (b) -- (y);
\end{tikzpicture}} \label{fig:tau}} 
\subfigure[$\ell$-tuple  $\tau_{\ell} = (xa,y)$]{
\makebox[.3\textwidth]{
\begin{tikzpicture}[every node/.style={circle, draw, inner sep=0pt, minimum width=5pt}]
\node (x)[label=above:$x$] at (0,0)  {};
\node (a)[label=left:$a$] at (0,-1) {};
\node (y)[label=below:$y$] at (0,-3) {}; 
\draw[->] (x) -- (a);
\path[->,decoration={snake}] { (a) edge[decorate] (y)};
\path[->,decoration={snake}] { (a) edge[bend right=60,decorate] (y.west)};
\path[->,decoration={snake}] { (a) edge[bend left=60,decorate] (y.east)};
\end{tikzpicture}} \label{fig:taul}} 
\subfigure[$r$-tuple  $\tau_r = (x,by)$]{
\makebox[.3\textwidth]{
\begin{tikzpicture}[every node/.style={circle, draw, inner sep=0pt, minimum width=5pt}]
\node (x)[label=above:$x$] at (0,0)  {};
\node (b)[label=below left:$b$] at (0,-2) {};
\node (y)[label=below:$y$] at (0,-3) {}; 
\draw[->] (b) -- (y);
\path[->,decoration={snake}] { (x) edge[decorate] (b)};
\path[->,decoration={snake}] { (x) edge[bend right=60,decorate] (b.west)};
\path[->,decoration={snake}] { (x) edge[bend left=60,decorate] (b.east)};
\end{tikzpicture}} \label{fig:taur}}
\caption{Tuples}
\end{figure}
\vspace{-0.1in}

\vone
\noindent
{\bf ST and LST Extensions.}
For a shortest path $r$-tuple $\tau_r = (x, by)$, we
define $L({\tau_r})$ to be 
the set of vertices which can be used as pre-extensions to create  LSTs
in~$G$.
Similarly, for a shortest path $\ell$-tuple $\tau_{\ell} = (xa, y)$,
$R(\tau_{\ell})$ is the set of vertices which can be used as post-extensions to
create LSTs in $G$. We do not 
define $R(\tau_r)$ and $L(\tau_{\ell})$. 
So we have:
\begin{eqnarray*}
L(x, by) &=& \{x': \mbox {$(x', x) \in E(G)$ and $(x'x, by)$ is an LST in $G$}\} \\
R(xa, y) &=& \{y': \mbox {$(y, y') \in E(G)$ and $(xa, yy')$ is an LST in $G$}\}.
\end{eqnarray*}

For $x, y \in V$, $L^{*}(x, y)$ denotes the set of vertices which can 
be used as pre-extensions to create shortest path tuples in $G$;
$R^{*}(x, y)$ is defined symmetrically:
\begin{eqnarray*}
L^{*}(x, y) &=& \{x': \mbox {$(x', x) \in E(G)$ and $(x'x, y)$ is a $\ell$-tuple representing SPs in $G$}\} \\
R^{*}(x, y) &=& \{y': \mbox {$(y, y') \in E(G)$ and $(x, yy')$ is an $r$-tuple representing SPs in $G$}\}.
\end{eqnarray*}

\noindent
Fig.~\ref{fig:example-table} shows a subset of these sets for the graph $G$ in Fig.~\ref{fig:illust-example}.

\paragraph{\bf Key Deviations from DI \cite{DI04}.} 
The assumption of unique shortest paths in \cite{DI04} 
ensures that $\tau = (xa, by)$,
$\tau_{\ell} = (xa, y)$, and $\tau_r = (x, by)$ all represent exactly the
same (single)  locally shortest path.
However, in our case, the set of paths represented by 
$\tau_{\ell}$ and $\tau_r$ can be different, and 
$\tau$ is a subset of paths represented by $\tau_{\ell}$ and
$\tau_r$. 
Our definitions of ST and LST extensions are
derived from the analogous definitions in \cite{DI04} for SP and LSP
extensions of paths.
For a path $\pi = x \rightarrow a \leadsto b \rightarrow y$,
DI defines sets $L$, $L^{*}$, $R$ and $R^{*}$.
In our case,
the analog of a path $\pi = x \rightarrow a \leadsto b \rightarrow y$ 
is a tuple $\tau = (xa, by)$, 
but to obtain efficiency, we  define
the set $L$ only for an $r$-tuple and the set $R$ only for an $\ell$-tuple. 
Furthermore, we define  $L^{*}$ and $R^{*}$ for each pair of vertices. 

In the following two lemmas we bound the total number of tuples in the graph and the total
number of tuples that contain a given vertex $v$.
These bounds also apply to the number of triples since there is exactly one triple for each tuple
in our \systemend.

\begin{lemma}
\label{lem:total-locally-shortest}
The number of LSTs in $G=(V,E)$ is bounded by $O(m^* \cdot \vstar)$.
\end{lemma}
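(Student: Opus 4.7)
The plan is to bound the number of LSTs by a double-counting argument that charges each LST $\tau = (xa,by)$ to its first edge $(x,a)$. I would establish two claims: first, that the first edge of any LST must lie on some shortest path of $G$, so it is one of at most $m^{*}$ candidates; and second, that once the first edge $(x,a)$ is fixed, the last edge $(b,y)$ of any LST with that prefix must lie in the single-source SP dag rooted at $a$, which by the stated property of $\vstar$ contains at most $\vstar$ edges. Multiplying these two bounds immediately yields the $O(m^{*}\cdot \vstar)$ total.

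The first claim follows directly from the LSP property. If $\tau=(xa,by)$ is an LST, then by definition $\tau$ represents at least one LSP $\pi : x \to a \leadsto b \to y$, and every proper subpath of $\pi$ is a shortest path in $G$. In particular, the prefix $x \to a \leadsto b$ is a shortest $x$-to-$b$ path, so the edge $(x,a)$ lies on a shortest path and is thus one of the $m^{*}$ SP-edges of $G$.

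The second claim is symmetric. Fixing $(x,a)$ and considering any LST $(xa,by)$ with this first edge, drop the leading vertex $x$ from the associated LSP $\pi$: the resulting proper subpath $a \leadsto b \to y$ must be a shortest $a$-to-$y$ path in $G$, so the edge $(b,y)$ lies in the SP dag rooted at $a$, which has at most $\vstar$ edges. Hence the number of valid last edges per fixed first edge is at most $\vstar$.

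I do not foresee a serious obstacle in this proof; the only care needed is to invoke exactly the two relevant proper-subpath conditions (prefix and suffix of $\pi$) in order to localize the first and last edges independently. This guarantees that the charging scheme assigns at most $\vstar$ LSTs to each of the at most $m^{*}$ candidate first edges, producing the claimed bound.
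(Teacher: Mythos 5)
Your proof is correct and is essentially the paper's own argument: both bound the first edge by the $m^*$ shortest-path edges (via the proper prefix of the LSP) and the last edge by the at most $\vstar$ edges on shortest paths out of $a$ (via the proper suffix), the only cosmetic difference being that the paper organizes the sum over the intermediate vertex $a$ while you charge directly to the first edge. (Both versions silently ignore single-edge LSTs, for which the proper-subpath argument is vacuous, but that is the paper's convention as well.)
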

\begin{proof}
For any LST $(\times a, \times \times)$, for some $a\in V$, 
the first  and last edge 
of any such tuple must lie on a shortest path containing $a$.
Let $E_a^{*}$ denote the set of edges that lie on shortest 
paths through $a$, and let $I_a$ be the set of incoming edges to $a$. Then,
there are at most $\vstar$ ways of choosing the last edge in
$(\times a, \times \times)$ and at most $E_a^{*} \cap I_a$ ways of
choosing the first edge in $(\times a, \times \times)$. Since
$\sum_{a \in V} |E_a^{*} \cap I_a | = m^*$, the 
number of LSTs in $G$ is at most
$\sum_{a \in V} \vstar \cdot |E_a^{*} \cap I_a | \leq m^* \cdot \vstar$.
{\hfill \qed} 
\end{proof}

\begin{lemma}
\label{lem:bound-tuples-thru-v}
The number of LSTs that contain a vertex $v$  
is $O({\vstar}^2)$.
\end{lemma}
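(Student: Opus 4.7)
The plan is to case-analyze the position of $v$ inside the LSP $\pi = x \to a \leadsto b \to y$ underlying an LST $\tau = (xa, by)$, where $a \leadsto b$ is a shortest path by the LSP property. Since $\pi$ contains $v$, exactly one of the following holds: (i) $v$ lies on the middle shortest subpath $a \leadsto b$ (including the endpoints $a$ or $b$), (ii) $v = x$, or (iii) $v = y$.

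For case (i), both proper subpaths $x \to a \leadsto b$ and $a \leadsto b \to y$ are shortest paths (by the LSP property that every proper subpath of an LSP is a shortest path), and both contain $v$. Hence the first edge $(x,a)$ lies on a shortest path through $v$ and the last edge $(b,y)$ lies on a shortest path through $v$. Invoking the definition of $\vstar$ twice, there are at most $\vstar$ choices for each of $(x,a)$ and $(b,y)$, so case (i) contributes at most $\vstar^2$ LSTs.

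For case (ii), where $v=x$, the edge $(v,a)$ is the first edge of the shortest path $v \to a \leadsto b$, which trivially passes through $v$, so $(v,a)$ is one of at most $\vstar$ edges on shortest paths through $v$. After fixing $a$, the edge $(b,y)$ is the last edge of the shortest path $a \leadsto b \to y$ from $a$ to $y$; hence $(b,y)$ lies on the single-source SP dag rooted at $a$, which by the second clause of the definition of $\vstar$ contains at most $\vstar$ edges. This gives at most $\vstar \cdot \vstar = \vstar^2$ LSTs in case (ii), and case (iii) is symmetric.

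Summing over the three cases yields the $O(\vstar^2)$ bound. The mild obstacle is that the clean ``both boundary edges lie on SPs through $v$'' argument of case (i) breaks down for cases (ii) and (iii): when $v$ is an endpoint of $\pi$, only one of the two longer subpaths is guaranteed to pass through $v$. The fix is to use the two different facets of $\vstar$ on the two endpoint-edges, namely ``edges on SPs through $v$'' for the edge incident to $v$, and ``edges on the single-source SP dag from $a$'' (resp.\ into $b$) for the opposite boundary edge.
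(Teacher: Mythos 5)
Your proof is correct and follows essentially the same three-way case split as the paper's (tuples with $v$ internal, $v$ the start vertex, $v$ the end vertex), with the internal case argued identically via the two boundary edges lying on shortest paths through $v$, and the endpoint cases via the SP dag rooted at $a$ (resp.\ into $b$). Your endpoint cases are in fact slightly tighter than the paper's ($\vstar^2$ rather than $n\cdot\vstar$, so you do not lean on the $\vstar=\Omega(n)$ assumption), modulo the degenerate single-edge tuples $(va,va)$, which need a separate trivial $O(n)$ count because a lone edge is an LSP but need not lie on any shortest path through $v$.
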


\begin{proof}
We distinguish three different cases:

1. Tuples starting with $v$: 
for a tuple that starts with edge $(v,a)$, the last edge must lie on $a$'s
SP dag, so there are at most
$\vstar$ choices for the last edge. 
Hence, 
the number of tuples with $v$ as start vertex is 
at most $\sum_{a \in V \setminus {v}} \vstar \le n \cdot \vstar$.

2. Similarly, the number of tuples with $v$ as end vertex is at most $n \cdot \vstar$.

3. For any tuple $ \tau = (xa, by)$ that contains $v$ as an internal vertex,
both $(x,a)$ and $(b,y)$ lie on a shortest path through $v$, 
hence  the number of such tuples 
is at most $\vstar^2$.
{\hfill \qed}
\end{proof}

\section{Decremental Algorithm}\label{sec:alg}

Here we present  our decremental APASP algorithm. Recall that
a decremental update on a vertex $v$ either deletes or
increases the weights of a subset of edges incident on $v$\nopagebreak. We begin with the  data structures 
we use.

\vone
\noindent {\bf Data Structures.}
\label{subsec:static-impl}
For every $x, y$, $x \neq y$ in $V$, we maintain the following:
\begin{enumerate}
\item $P(x,y)$ -- a priority queue containing LSTs from $x$ to $y$ with weight as key.
\item $P^*(x,y)$ -- a priority queue containing STs from $x$ to $y$ with weight as key.
\item $L^*(x,y)$ -- a balanced search tree containing vertices with vertex ID as key.
\item $R^*(x,y)$ -- a balanced search tree containing vertices with vertex ID as key.
\end{enumerate}

For every $\ell$-tuple we have its right extension, and for every $r$-tuple its left extension.
These sets are stored as balanced search trees (BSTs)  with  the vertex ID as a key.
Additionally, we maintain all  tuples in a BST  $\dict$, with a
tuple $\tau=(xa, by)$ having key $[x, y, a, b]$. We also maintain  pointers from $\tau$ to
$R(xa, y)$ and  $L(x, by)$,  and to the corresponding triple containing $\tau$ in $P(x,y)$,
(and in $P^*(x,y)$ if $(xa, by)$ is an ST).
Finally, we maintain a sub-dictionary of $dict$  called \MT (explained below).
\MTend, unlike  the other data structures,
is specific only to one update. 

\vone
\noindent {\bf The Decremental Algorithm.}
Given  the updated vertex $v$ and the updated 
weight function $\weight'$ over all the incoming and outgoing edges of $v$, the decremental
algorithm performs two main steps \emph{cleanup} and \emph{fixup}, as in DI.
The cleanup procedure removes from the \system every LSP that contains the updated vertex
$v$. The following definition of a {\em new} LSP is from DI~\cite{DI04}.
\begin{definition}
\label{def:newLSP}
A path that is shortest (locally shortest) after an update to vertex $v$ is {\em new} if either
it was not an SP (LSP) before the update, or it contains $v$.
\end{definition}
The fixup procedure adds to the \system all the {\em new} shortest and locally shortest paths.
In contrast to DI,  recall that we store locally shortest paths in $P$ and $P^{*}$ as triples. 
Hence removing or adding paths implies decrementing or incrementing
the count in the relevant triple; thus a triple is removed or added only if its count goes down to zero or up from zero.
Moreover,  new tuples may be created through combining several existing tuples.
Some of the updated data structures for the graph $G'$ in Fig.~\ref{fig:illust-example2}, 
obtained
after a decremental update on $v$ 
in the graph $G$ in Fig.~\ref{fig:illust-example}, are schematized in 
Fig. \ref{fig:example-table2}.

\subsection{The Cleanup Procedure}\label{sec:cleanup}

Algorithm~\ref{algo:cleanup} (cleanup) uses
an initially empty  heap $H_c$ of
triples. It also initializes the empty dictionary \MTend.
The algorithm then creates
the trivial triple corresponding to the vertex $v$ and adds it to $H_c$ (Step~\ref{cleanup:init}, Algorithm~\ref{algo:cleanup}).
For a triple $((xa, by), wt, count)$ the key in $H_c$ is $[wt, x, y]$.
The algorithm repeatedly extracts
min-key triples from $H_c$ (Step~\ref{cleanup:extract-set}, Algorithm~\ref{algo:cleanup}) and {\em processes} them. The processing
of triples involves left-extending (Steps~\ref{process-cleanup:left-extend-start}--\ref{process-cleanup:left-extend-end}, Algorithm~\ref{algo:cleanup}) and right-extending triples (Step~\ref{cleanup:rextend}, Algorithm~\ref{algo:cleanup}) and removing from the tuple system
the set of LSPs thus formed. This is similar to cleanup in DI. However, 
since we deal with a set of paths instead of a single path, we need significant modifications, of
which we now highlight two: (i) Accumulation used in Step~\ref{cleanup:extract-set} and (ii) use of
\MT in Step~\ref{process-cleanup:left-extend-unmarked} and Step~\ref{process-cleanup:left-extend-markR}.

\vspace{-.1in}
\begin{algorithm}[h]
\scriptsize
\begin{algorithmic}[1]
\STATE $H_c \leftarrow \emptyset$; \MT $\leftarrow \emptyset$ \label{cleanup:initHcMT}
\STATE $\gamma \leftarrow ((vv,vv), 0, 1)$; add $\gamma$ to $H_c$ \label{cleanup:init}
\WHILE {$H_c \neq \emptyset$ } \label{cleanup:while}
        \STATE extract in $S$ all the triples with min-key $[wt,x,y]$ from $H_c$ \label{cleanup:extract-set}
\FOR {every $b$ such that $(x\times,by) \in S$}\label{process-cleanup:left-extend-start}
\STATE let $fcount'  = \sum_{i} ct_i$ such that $((xa_i, by), wt, ct_i ) \in S$ \label{process-cleanup:left-extend-fcount}
\FOR {every $x' \in L(x,by)$ such that $(x'x, by) \notin $ \MT } \label{process-cleanup:left-extend-unmarked}
                \STATE $wt' \leftarrow wt+\weight(x',x)$; $\gamma' \leftarrow ((x'x,by),wt', fcount')$; add $\gamma'$ to $H_c$ \label{process-cleanup:create-left}
                \STATE remove $\gamma'$ in $P(x',y)$ // decrements $count$ by $fcount$  \label{process-cleanup:left-extend-removeP}

                \IF {a triple for $(x'x, by)$ exists in  $P(x',y)$}
                        \STATE insert  $(x'x,by)$ in \MT \label{process-cleanup:left-extend-markR}
                \ELSE
                        \STATE delete $x'$ from $L(x,by)$ and delete $y$ from $R(x'x,b)$ \label{process-cleanup:left-extend-removeL}
                \ENDIF
                \IF {a triple for $(x'x, by)$ exists in  $P^*(x',y)$}  \label{process-cleanup:left-extend-checkSP}

                        \STATE remove $\gamma'$ in $P^*(x',y)$  // decrements $count$ by $fcount$ \label{process-cleanup:left-extend-removePstar}
					    \STATE \textbf{if} $P^*(x,y) = \emptyset$ \textbf{then} delete $x'$ from $L^{*}(x, y)$ \label{process-cleanup:left-extend-removeLstar}
					    \STATE \textbf{if} $P^*(x',b) = \emptyset$ \textbf{then} delete $y$ from $R^{*}(x',b)$ \label{process-cleanup:left-extend-removeRstar}
                \ENDIF
\ENDFOR
\ENDFOR \label{process-cleanup:left-extend-end}   
\STATE perform symmetric steps \ref{process-cleanup:left-extend-start} -- \ref{process-cleanup:left-extend-end} for right extensions  \label{cleanup:rextend}
\ENDWHILE
\end{algorithmic}
\caption{cleanup$(v)$}
\label{algo:cleanup}
\end{algorithm}

\vspace{-0.2in}
\paragraph{\bf Accumulation.} In Step~\ref{cleanup:extract-set} we extract a collection $S$ of triples all
with key $[wt,x,y]$
from $H_c$ and process them together in that iteration of the while loop. Assume
that for a fixed last edge $(b,y)$,
$S$ contains triples of the form $(xa_t, by)$, for $t = 1, \ldots, k $.  Our algorithm processes and left-extends all
these triples with the same last edge together. This ensures that, for any $x' \in L(x, by)$, we generate the triple $(x'x, by)$ exactly once. The accumulation is correct because
 any valid left extension for a triple $(xa_i, by)$
is also a valid left extension for $(xa_j, by)$ when both triples have the same weight.

\paragraph{\bf  \MTend.} 
The dictionary of \MT is used to ensure
that every path through the vertex $v$ is removed from the tuple system exactly once
and therefore counts of paths in triples are correctly maintained. Note that a path
of the form $(xa, by)$ can be generated either as a left extension of $(a, by)$ or
by a right extension of $(xa, b)$. This is true in DI as well. However, due to the assumption
of unique shortest paths they do not 
need to maintain counts of paths, and hence do not require the book-keeping using \MTend. 

\vone
\noindent
See sections \ref{app:accum} and \ref{app:marked} in the Appendix for more details.
 
\subsubsection{ Correctness and Complexity.}

We establish the correctness of cleanup in Lemma~\ref{lem:cleanup-corr2} and an upper bound on its
worst case time in Lemma~\ref{lem:cleanup-time}.

\begin{lemma}
\label{lem:cleanup-corr2}
After Algorithm~\ref{algo:cleanup}
is executed, the
counts of triples in $P$ ($P^{*}$) represent counts of LSPs (SPs)  in $G$
that do not pass through $v$. Moreover, the sets $L, L^{*}, R, R^{*}$ are correctly maintained.
\end{lemma}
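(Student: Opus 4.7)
My approach is to prove both claims of Lemma~\ref{lem:cleanup-corr2} via a single induction on the order in which triples are extracted from $H_c$. The key invariant I want to maintain is the following: each time a min-key set $S$ with key $[wt,x,y]$ is extracted from $H_c$ at Step~\ref{cleanup:extract-set}, $S$ consists of exactly those triples $((xa,by),wt,c)$ for which $(xa,by)$ is an LST of weight $wt$ in the original graph $G$ that contains $v$, with $c$ equal to the number of paths represented by $(xa,by)$ that actually pass through $v$. Once this invariant is in hand, the decrements at Steps~\ref{process-cleanup:left-extend-removeP} and~\ref{process-cleanup:left-extend-removePstar} subtract exactly the number of paths through $v$ from the corresponding $P$ and $P^{*}$ triples, and the conditional removals at Steps~\ref{process-cleanup:left-extend-removeL}, \ref{process-cleanup:left-extend-removeLstar}, and~\ref{process-cleanup:left-extend-removeRstar} delete a vertex from $L,R,L^{*},R^{*}$ precisely when the underlying triple has just been emptied, which gives the second assertion of the lemma.

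The base case of the induction is the seed triple $((vv,vv),0,1)$ inserted at Step~\ref{cleanup:init}. For the inductive step, consider any LST $\tau'=(x'x,by)$ that contains $v$. Because edge weights are positive, the sub-structure obtained by stripping the first edge $(x',x)$ (an $r$-tuple $(x,by)$) and the one obtained by stripping the last edge $(b,y)$ (an $\ell$-tuple $(x'x,b)$) both correspond to LSTs of strictly smaller weight than $\tau'$, and at least one of them also contains $v$: if $v=x'$ the right-extension route via $(x'x,b)$ carries $v$, if $v=y$ the left-extension route via $(x,by)$ does, and in every other case both sides inherit $v$. The inductive hypothesis places the relevant precursor in $H_c$ with the correct count, so its processing at Step~\ref{process-cleanup:create-left} (or in the symmetric right branch at Step~\ref{cleanup:rextend}) schedules $\tau'$ into $H_c$. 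The accumulation in Steps~\ref{process-cleanup:left-extend-start}--\ref{process-cleanup:left-extend-fcount} groups all triples in $S$ sharing the last edge $(b,y)$, so the resulting $fcount'$ is exactly the total contribution of paths through $v$ reaching $\tau'$ along this extension direction. The dictionary $\MT$ handles the case in which $\tau'$ could be produced from both sides: whichever branch fires first performs the decrement and either marks $\tau'$ in $\MT$ (if the triple in $P(x',y)$ survives) or deletes it from $L$ and $R$ (if it does not), while the complementary branch skips $\tau'$ via the test in Step~\ref{process-cleanup:left-extend-unmarked}, so the count in $P(x',y)$ is decremented exactly once per path through $v$.

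The subtlest obstacle I anticipate is the boundary case in which $v$ coincides with one of the four visible vertices $x',x,b,y$ of $\tau'$, since then every path represented by the relevant precursor automatically contains $v$ and the through-$v$ count has to be traced carefully through the degenerate placeholder tuples (such as $(vv,vv)$ and its early extensions) that the algorithm uses to seed the heap. Once this case analysis is nailed down, the symmetric right-extension branch of cleanup and the $P^{*}$-specific handling in Steps~\ref{process-cleanup:left-extend-checkSP}--\ref{process-cleanup:left-extend-removeRstar} follow by the same induction restricted to shortest path tuples rather than all locally shortest tuples, completing the proof.
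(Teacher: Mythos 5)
Your proposal is correct and follows essentially the same route as the paper's proof: your induction on the heap-extraction order, with the invariant that $H_c$ holds, for each locally shortest tuple through $v$, a triple whose count equals its number of paths through $v$ and whose count has already been decremented from $P$, is precisely the paper's loop invariant $\mathcal{I}_1$, and your treatment of the two generation routes via \MT versus deletion from $L$ and $R$ corresponds to the paper's invariants $\mathcal{I}_2$ and $\mathcal{I}_3$. The boundary case you flag (when $v$ is an endpoint and the tuple is reachable from only one extension direction, traced back to the seed triple $((vv,vv),0,1)$) is handled in the paper's initialization and maintenance steps exactly along the lines you sketch.
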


\begin{proof}
To prove the lemma statement we show that the while loop in Step~\ref{cleanup:while}
of Algorithm~\ref{algo:cleanup} maintains the following invariants.\\
{\bf Loop Invariant:} At the start of each iteration of the while loop in Step~\ref{cleanup:while}
of Algorithm~\ref{algo:cleanup}, assume that the min-key triple to be extracted and processed from $H_c$ has key $[wt, x, y]$. Then
the following properties hold about the tuple system and $H_c$.
We assert the invariants about the sets $P$, $L$, and $R$. Similar arguments can be used to establish the
correctness of the sets $P^{*}$, $L^{*}$, and $R^{*}$.

\begin{enumerate}
\item [$\mathcal{I}_1$]\label{proof:item1c} For any $a, b \in V$, if $G$ contains $c_{ab}$ number of locally shortest paths of weight ${wt}$ of the form $(xa, by)$
passing through $v$,
then $H_c$ contains a triple $\gamma = ((xa, by), { wt}, c_{ab})$. Further, $c_{ab}$ has been
decremented from the initial count in the triple for $(xa, by)$  in $P(x,y)$.
\item [$\mathcal{I}_2$] Let $[\hat{wt}, \hat{x}, \hat{y}]$ be the key extracted from $H_c$ and processed in the previous iteration.
For any key $[wt_1, x_1, y_1] \leq [\hat{wt}, \hat{x}, \hat{y}]$, let
$G$ contain ${ c}  > 0 $ number of LSPs of weight  ${ wt_1}$ of the
form $(x_1a_1, b_1y_1)$. Further, let ${ c_v}$ (resp. ${ c_{\bar v}}$) denote the number of such LSPs
 that pass through $v$ (resp. do not pass through $v$).
Here ${ c_v + c_{\bar v} = c}$. Then,
\begin{enumerate}

\item \label{proof:item2c} if $c>c_v$ there is a triple in $P(x_1,y_1)$ of the form $(x_1a_1, b_1y_1)$  and weight $wt_1$ representing $c-c_v$ LSPs. If $c=c_v$ there is no such triple in $P(x_1,y_1)$.

\item \label{proof:item3c} $x_1 \in L(a_1, b_1y_1)$,  $y_1 \in R(x_1a_1, b_1)$, and    $(x_1a_1, b_1y_1) \in $ \MT
iff ${ {c_{\bar v}} > 0}$.
\item \label{proof:item4c} For every $x' \in L(x_1, b_1y_1)$, a triple corresponding to $(x' x_1, b_1y_1)$
with weight $wt'~=~wt_1~+~\weight (x',x_1)$ and the appropriate count is 
in $H_c$ if $[wt', x', y_1] \geq [wt, x, y]$. A similar claim can be stated for every $y' \in R(x_1a_1, y_1)$.
\end{enumerate}
\item [$\mathcal{I}_3$] \label{proof:item5c} For any key $[wt_2, x_2, y_2 ] \geq [wt, x, y]$, let
$G$ contain $c > 0$ LSPs of weight  ${ wt_2}$ of the
form $(x_2a_2, b_2y_2)$.
Further, let ${ c_v}$ (resp. ${ c_{\bar v}}$) denote the number of such LSPs
 that pass through $v$ (resp. do not pass through $v$).
Here ${ c_v + c_{\bar v} = c}$.
Then the tuple $(x_2 a_2, b_2 y_2) \in$ \MTend, iff $c_{\bar v} > 0$ and a triple for
$(x_2a_2, b_2 y_2)$ representing $c_v$ LSPs is present in $H_c$.
\end{enumerate}

\noindent
See section \ref{app:corr1} in the Appendix for the details of the proof.
\qed
\end{proof}

\begin{lemma}
\label{lem:cleanup-time}
For an update on a vertex $v$, Algorithm~\ref{algo:cleanup} takes $O( \vstar^2 \cdot \log{n})$ time.
\end{lemma}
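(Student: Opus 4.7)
The plan is to combine two facts: (i) the total number of triples that are ever placed into the heap $H_c$ during cleanup is $O(\vstar^2)$; and (ii) the work associated with each such triple is $O(\log n)$.

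For (i), I would first argue by induction on the order in which triples are extracted from $H_c$ that every triple inserted into $H_c$ represents a tuple that contains the updated vertex $v$. The seed triple $((vv,vv),0,1)$ trivially contains $v$, and any left- or right-extension (Steps~\ref{process-cleanup:left-extend-start}--\ref{cleanup:rextend}) of a tuple containing $v$ still contains $v$. By Lemma~\ref{lem:bound-tuples-thru-v} there are only $O(\vstar^2)$ distinct LSTs containing $v$. To boost this from distinct tuples to total insertions, I would appeal to the combined effect of the accumulation step and the \MTend dictionary, whose invariants are formalized in Lemma~\ref{lem:cleanup-corr2}: a tuple $(x'x,by)$ corresponds to a unique weight $\weight(x',x)+d(x,b)+\weight(b,y)$, so it can appear in $H_c$ only at a single key $[wt,x,y]$; accumulation (Step~\ref{cleanup:extract-set}) processes all same-key same-suffix triples in one batch, so a given pre-extension vertex $x' \in L(x,by)$ generates at most one new triple from that batch; and \MTend, together with the deletions from $L, R, L^*, R^*$ in Steps~\ref{process-cleanup:left-extend-removeL}--\ref{process-cleanup:left-extend-removeRstar}, prevents the tuple from being regenerated via the symmetric right-extension. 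Hence each through-$v$ tuple accounts for only $O(1)$ entries in $H_c$, yielding the $O(\vstar^2)$ total.

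For (ii), when a triple $\gamma' = ((x'x,by),wt',fcount')$ is created or processed, the work performed is dominated by data-structure operations, each of which runs in $O(\log n)$ time: insertion into and extraction from the heap $H_c$; a count-decrement (with possible removal) in the priority queues $P(x',y)$ and $P^*(x',y)$; a lookup and possible insertion in \MTend and in $\dict$; and, when counts drop to zero, deletions of vertices from the balanced search trees $L(x,by)$, $R(x'x,b)$, $L^*(x,y)$, and $R^*(x',b)$. Computing $wt'$ and the accumulated count $fcount'$ contributes only $O(1)$ extra work. Multiplying the per-triple cost by the $O(\vstar^2)$ bound from (i) yields the claimed $O(\vstar^2 \cdot \log n)$ running time.

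The main obstacle, and the step I would verify most carefully, is the charging argument that each tuple containing $v$ gives rise to only a constant number of entries in $H_c$. Without the combination of accumulation and \MTend, a tuple $(x'x,by)$ could in principle be regenerated many times: once for each shortest-path prefix that can left-extend to it, and symmetrically on the right. The accumulation step collapses all same-weight same-endpoint extractions into a single pass, and the \MTend dictionary (with the accompanying removals from $L$ and $R$) blocks the complementary extension from re-inserting the tuple; once these two mechanisms are shown to be sufficient, the remaining complexity analysis reduces to routine bookkeeping of $O(\log n)$ operations against the $O(\vstar^2)$ bound of Lemma~\ref{lem:bound-tuples-thru-v}.
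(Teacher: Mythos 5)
Your proof is correct and follows essentially the same route as the paper's: bound the number of triples processed by the $O(\vstar^2)$ count of through-$v$ tuples from Lemma~\ref{lem:bound-tuples-thru-v}, argue each such tuple enters $H_c$ only $O(1)$ times, and charge $O(\log n)$ per triple for the data-structure operations. The only difference is that you spell out the justification (via accumulation and \MTend) for the single-insertion claim, which the paper simply asserts.
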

\begin{proof}
The cleanup procedure examines a triple $\gamma$ only if the tuple in $\gamma$ contains the
updated vertex $v$. It removes each such  $\gamma$ from a constant number of data structures
 ($P, P^{*}, L, L^{*}, R, R^{*}$),
each with an $O(\log n)$ cost. In addition, each triple is inserted into $H_c$ and
extracted from it exactly once. Since the number of tuples containing $v$ is bounded by Lemma~\ref{lem:bound-tuples-thru-v},
the lemma follows.
\qed

\end{proof}

\vspace{-0.3in}
\subsection{The Fixup Procedure}
\label{sec:fixup}
The goal of the fixup procedure is to add to the \system all {\em new} shortest and locally shortest paths (recall Definition~\ref{def:newLSP}).

The fixup procedure (pseudo-code in Algorithm~\ref{algo:fixup})  works with a heap of triples ($H_f$ here),
which is initialized with a {\em candidate} shortest path triple for each pair of vertices.
Recall that for a pair $x,y$, there may be several triples of a given weight $wt$ in $P(x,y)$. Instead of
inserting all min-weight triples (which are candidates for shortest path
triples), our algorithm inserts exactly one
triple for every pair of vertices into $H_f$. This ensures that the number of triples examined during fixup is not too large.
Once $H_f$ is suitably initialized,
the  fixup algorithm
repeatedly extracts the set of triples with minimum key and processes them.
The main invariant for
the algorithm (similar to DI~\cite{DI04}) is that for a pair $x, y$,
the weight of the first set of triples
extracted from $H_f$ gives the distance from $x$ to $y$ in the updated
graph. Thus, these triples are all identified as shortest path triples,
and we need to extend them if in fact they represent {\em new}
shortest paths.
To readily identify triples containing paths through $v$ we 
use some additional book-keeping:
for every triple $\gamma$ we store the update number (update-num($\gamma$))
and a count of the number of paths in that triple that
pass through $v$ ($paths(\gamma,v)$). 
Finally, similar to cleanup, the fixup procedure also left and right extends triples to create
triples representing new locally shortest paths. 

We now describe the steps of the algorithm.
 
 Algorithm~\ref{algo:fixup} initializes $H_f$ in Steps~\ref{fixup:init1}--\ref{fixup:init2} as follows.
(i) For every edge incident on $v$, it creates a trivial triple $\gamma$ which is inserted into $H_f$ and $P$. It also sets $\mbox{update-num}(\gamma)$ and
$\mbox{paths}(\gamma, v)$ for each such $\gamma$;
(ii) For every $x, y \in V$, it adds a candidate min-weight triple from $P(x,y)$
to $H_f$ (even if
$P(x,y)$ contains several min-weight triples; this is done for efficiency).

Algorithm~\ref{algo:fixup}  executes Steps~\ref{fixup:main1}--\ref{fixup:phase3-main-check-end}
when for a pair $x, y$, the first  set of triples $S'$, all of weight $wt$, are extracted
from $H_f$. We claim  (Invariant~\ref{inv:fixup1}) that $wt$ denotes the 
shortest path distance from $x$ to $y$ in the updated graph. The goal of
Steps~\ref{fixup:main1}--\ref{fixup:phase3-main-check-end} is to create a set $S$
of triples that represent  {\em new} shortest paths, and this step is considerably more involved
than the corresponding step in DI.
In DI~\cite{DI04}, only a single path $p$ is extracted from $H_f$ possibly resulting in a {\em new} shortest path from $x$ to $y$. 
If $p$ is {\em new} then it is added to $P^*$ and the algorithm extends it to create new LSP. In our case, we extract not just multiple paths but multiple shortest path triples from $x$ to $y$, and 
some of these triples may not be in $H_f$. We now describe how
the algorithm generates the new shortest paths
in Steps~\ref{fixup:main1}--\ref{fixup:phase3-main-check-end}. 

\noindent \underline {Steps~\ref{fixup:main1}--\ref{fixup:phase3-main-check-end}, Algorithm~\ref{algo:fixup}} --  As mentioned above, 
Steps~\ref{fixup:main1}--\ref{fixup:phase3-main-check-end}  create a set $S$
of triples that represent {\em new} shortest paths. There are two cases.
\begin{itemize}
\item  \underline{$P^{*}(x, y)$ is empty:}
Here, we process the triples in $S'$, but 
in addition,
we may be required to process triples of weight $wt$ from the set $P(x, y)$. 
To see this, consider the example in Fig.~\ref{fig:illust-example} and consider the pair $a_1, b_1$. 
In $G$, there is one
shortest path $\langle a_1, v, b_1 \rangle$ which is removed from $P(a_1, b_1)$ and $P^{*}(a_1, b_1)$ during cleanup. In $G'$,
$d(a_1, b_1) = 4$
and there are 2 shortest paths, namely $p_1 = \langle a_1, b_1 \rangle$
and $p_2 = \langle a_1, v_1, b_1\rangle$.  Note that both of these are LSPs in $G$ and 
therefore are present in $P(a_1,b_1)$. In Step~\ref{fixup:init2}, Algorithm~\ref{algo:fixup} we insert exactly one of them into
the heap $H_f$. However, both need to be processed and also left and right extended to create new
locally shortest paths. Thus, under this condition, we examine all the min-weight triples present in $P(a_1,b_1)$. 
\item  \underline{$P^{*}(x, y)$ is non-empty:} 
After a decremental update, the distance from $x$ to $y$ can either remain the same or increase,
but it cannot decrease.
Further,  cleanup removed from the \system all paths that contain $v$. Hence,
 if $P^{*}(x, y)$ is non-empty at this point,  it implies that all paths in $P^{*}(x, y)$ avoid
$v$.  
In this case, we can
show (Invariant~\ref{inv:fixup2}) that it suffices
to only examine the  triples present in $H_f$. Furthermore, the only paths that we need to process are the paths
that pass through the vertex $v$.
\end{itemize}
\noindent \underline {Steps~\ref{fixup:startleft}--\ref{fixup:right}, Algorithm~\ref{algo:fixup}} -- 
These steps left-extend and right-extend the triples in $S$ representing {\em new} shortest paths 
from $x$ to $y$.

\vspace{-0.2in}
\begin{algorithm}[H]
\scriptsize
\begin{algorithmic}[1]
\STATE $H_f \leftarrow \emptyset$; \MT $\leftarrow \emptyset$ \label{fixup:init0}
\FOR {each edge incident on $v$ } \label{fixup:init1}
	\STATE create a triple $\gamma$; set $paths(\gamma, v) = 1$; set update-num$(\gamma)$; add $\gamma$ to $H_f$ and to $P()$
\ENDFOR
\FOR {each $x, y \in V$}
	\STATE add a min-weight triple from $P(x, y)$ to $H_f$ \label{fixup:init2}
\ENDFOR
\WHILE {$H_f \neq \emptyset$} \label{fixup:phase3-begin}
        \STATE extract in $S'$ all triples with min-key $[wt,x,y]$ from $H_f$; $S \leftarrow \emptyset$ \label{fixup:phase3-extract1}
        \IF {$S'$ is the first extracted set from $H_f$ for $x,y$}  \label{fixup:phase3-first-ext}
		\STATE \COMMENT{Steps~\ref{fixup:main1}--\ref{fixup:phase3-add2LRStar1}: add new STs (or increase counts of existing STs) from $x$ to $y$.}
		\IF { $P^*(x,y)$ is empty} \label{fixup:main1}
        	\FOR {each $\gamma' \in P(x,y)$ with weight $wt$} \label{fixup:phase3-addfromP-begin}
                        	\STATE let $\gamma' = ((xa', b'y), wt, count')$
                        	\STATE add $\gamma'$ to $P^{*}(x,y)$ and $S$; add $x$ to $L^{*}(a',y)$ and $y$ to $R^{*}(x,b')$ \label{fixup:phase3-add2LRStar2}
	                \ENDFOR \label{fixup:phase3-addfromP-end}
        	\ELSE
        		\FOR {each $\gamma' \in S'$ containing a path through $v$}   \label{fixup:phase3-addfromX-begin}
       				\STATE let $\gamma' = ((xa', b'y), wt, count')$
                    \STATE add $\gamma'$ with $paths(\gamma',v)$ in $P^{*}(x,y)$ and $S$; add $x$ to $L^{*}(a',y)$ and $y$ to $R^{*}(x,b')$ \label{fixup:phase3-add2LRStar1}
        		\ENDFOR \label{fixup:phase3-addfromX-end}
        	\ENDIF \label{fixup:phase3-main-check-end}
        	\STATE \COMMENT{Steps~\ref{fixup:startleft}--\ref{fixup:endleft}: add new LSTs (or increase counts of existing LSTs) that extend SPs from $x$ to $y$.}
		\FOR {every $b$ such that $(x \times,by) \in S$} \label{fixup:startleft}
        		\STATE let $fcount'  = \sum_{i} ct_i$ such that $((xa_i, by), wt, ct_i ) \in S$
        		\FOR {every $x'$ in $L^{*}(x,b)$}
                		\IF {$(x'x, by) \notin$ \MT}
                        		\STATE $wt' \leftarrow wt+\weight(x',x)$; $\gamma' \leftarrow ((x'x,by),wt', fcount')$ \\
                     			\STATE set update-num$(\gamma')$; $paths(\gamma',v)\leftarrow \sum_{\gamma=(x\times,by)} paths(\gamma,v)$; add $\gamma'$ to $H_f$ \label{fixup:add1}
                        		\IF {a triple for $(x'x, by)$ exists in $P(x',y)$}
                                		\STATE add $\gamma'$ with $paths(\gamma', v)$ in $P(x',y)$; add $(x'x, by)$ to \MT 
                        		\ELSE
                                		\STATE add $\gamma'$ to $P(x',y)$; add $x'$ to $L(x,by)$ and $y$ to $R(x'x,b)$ 
                        		\ENDIF
                		\ENDIF
        		\ENDFOR
		\ENDFOR \label{fixup:endleft}
		\STATE perform steps symmetric to Steps  \ref{fixup:startleft} -- \ref{fixup:endleft} for right extensions.             \label{fixup:right}
        \ENDIF
\ENDWHILE \label{fixup:phase3-end}
\end{algorithmic}
\caption{fixup$(v, \weight')$}
\label{algo:fixup}
\end{algorithm}

Fixup maintains the following two invariants. The invariant below (Invariant~\ref{inv:fixup1}) shows that
for any pair $x, y$,
the weight of the first set of the triples extracted from $H_f$ determines the shortest path
distance from $x$ to $y$.  The proof of the invariant is similar to the proof of Invariant~3.1 in \cite{DI04}.

\begin{invariant}
\label{inv:fixup1}
If the set $S'$ in Step~\ref{fixup:phase3-extract1} of Algorithm~\ref{algo:fixup} is the first extracted set from
$H_f$ for $x,y$, then the
weight of each triple in  $S'$ 
is the shortest path distance
from $x$ to $y$ in the updated graph. 
\end{invariant}
\begin{proof}
Assume for the sake of contradiction that the invariant is violated
at some extraction. Thus, the first set of triples $S'$ of weight $\hat{wt}$ extracted for some pair $(x,y)$ do not represent the set of
shortest paths from $x$ to $y$ in the updated graph. Consider the earliest of these events
and let $\gamma = ((xa', b'y), wt, count)$ be a triple in the updated graph
that represents a set of shortest paths from $x$ to $y$
with $wt < \hat{wt}$. The triple $\gamma$ cannot be present in $H_f$,
else it would have been extracted before any triple of weight $\hat{wt}$
from $H_f$. Moreover, $\gamma$ cannot be in $P(x, y)$ at the beginning of
fixup otherwise $\gamma$ (or some other triple of weight $wt$) would
have been inserted into $H_f$ during Step~\ref{fixup:init2} of Algorithm~\ref{algo:fixup}.
Thus $\gamma$ must be a {\em new} LST generated by the algorithm.
Since all edges incident on $v$ are added to $H_f$ during Step~\ref{fixup:init1}
of Algorithm~\ref{algo:fixup} and $\gamma$ is not present in $H_f$, implies that
$\gamma$ represents paths which have at least two or more edges. We
now define left($\gamma$) as the set of LSTs of the form
$((xa, c_ib), wt - \weight(b, y), count_{i})$ that
represent all the LSPs in the left tuple $(xa, b)$; similarly we
define right$(\gamma)$ as the set of LSTs of the form $((ad_j, by), wt - \weight(x, a), count_j)$ that represent all the LSPs in the right
tuple $(a, by)$. Note that since $\gamma$ is a shortest path tuple,
all the paths represented by LSTs in left$(\gamma)$ and right$(\gamma)$ are
also shortest paths. All of the paths in either left$(\gamma)$ or  in  right$(\gamma)$
are {\em new} shortest paths and therefore are not present in $P^*$ at the beginning of fixup.
Since edge weights are positive $(wt - \weight(b,y)) < wt < \hat{wt}$ and
$(wt - \weight(x,a)) < wt < \hat{wt}$. As we extract paths from $H_f$ in
increasing order of weight, and all extractions before the wrong extraction
were correct, the triples in left$(\gamma)$ and right$(\gamma)$ should have been extracted from $H_f$
and added to $P^{*}$. Thus, the triple corresponding to $(xa, by)$ of
weight $wt$ should have been generated during left or right extension and
inserted in $H_f$. Hence, some triple of weight $wt$ must be extracted from $H_f$ for the pair
$(x,y)$ before any triple of weight $\hat{wt}$ is extracted from $H_f$. This
contradicts our assumption that the invariant is violated.
\qed
\end{proof}

Using Invariant~\ref{inv:fixup2} below we show that fixup indeed considers all of the {\em new} shortest paths for any pair $x,y$.
Recall that all the {\em new} shortest paths for a pair need not be present in $H_f$ and we may be required
to consider min-weight triples present in $P( \cdot)$ as well.

\begin{invariant}
\label{inv:fixup2}
The set $S$ of triples constructed in Steps~\ref{fixup:main1}--\ref{fixup:phase3-main-check-end} of Algorithm~\ref{algo:fixup}
represents all of the {\em new} shortest paths from $x$ to $y$.
\end{invariant}
\begin{proof}
Any new SP from $x$ to $y$ is of the following three types:
\begin{enumerate}
\item a single edge containing the vertex $v$ (such a path is
added to $P(x, y)$ and $H_f$ in Step~\ref{fixup:init1})
\item a path generated via left/right extension of some shortest path (such
a path is added to $P(x, y)$ and $H_f$ in Step~\ref{fixup:add1} and an analogous step in right-extend).
\item a path that was an LSP but not SP before the update and is an SP after the update.
\end{enumerate}
In (1) and (2) above
any new SP from $x$ to $y$  which is added to $H_f$ is also added to $P(x, y)$. However,
amongst the several triples representing paths of the form (3) listed above, only one candidate triple
will be present in $H_f$. Thus we conclude that for a given $x, y$ and when we extract from $H_f$
triples of weight $wt$, $P(x,y)$ contains
a superset of the triples that are present in $H_f$. We now consider the two cases that the algorithm deals with.
\begin{itemize}
\item $P^{*}(x, y)$ is empty when the first set of triples for $x, y$ is extracted from $H_f$. In this case,
we process all the min-weight triples in $P(x,y)$. By the above argument, we know that all new SPs from $x$ to $y$
are present in $P(x, y)$. Therefore it suffices to argue that all of them are {\em new}. Assume for the sake of contradiction,
some path $p$ represented by them is {\em not new}. By definition, $p$ does not contain $v$ and $p$ was a SP before
the update. Therefore, clearly $p$ was in $P^{*}(x,y)$ before the update. However, since cleanup only removes paths that contain $v$,
the path $p$ remains untouched during cleanup and hence continues to exist in $P^{*}(x,y)$. This contradicts the fact that $P^{*}(x,y)$
is empty.
\item $P^{*}(x,y)$ is not empty when the first set of triples for $x, y$ is extracted from $H_f$.
Let the weight of triples in $P^{*}(x,y)$ be $wt$.
This implies that the shortest path distance from $x$ to $y$ before and after the update is $wt$.
Recall that we are dealing with decremental updates.
We first argue that it suffices to consider triples in $H_f$. This is observed from the fact that
any {\em new} SP of the form (1) and (2) listed above is present in $H_f$. Furthermore,
note that any path of form (3) above has a weight strictly larger than $wt$ since it was an LSP and not SP before the update.
Thus in the presence of paths of weight $wt$, none of the paths of form (3) are candidates for shortest paths from $x$ to $y$.
This justifies considering triples only in $H_f$.

Finally, we note that for any triple considered, our algorithm only processes paths through $v$.
This again follows from the fact that only paths through $v$ were removed by cleanup and possibly need to be
restored if the distance via them remains unchanged after the update.
\end{itemize}
\qed
\end{proof}

The following lemma establishes the correctness of fixup.

\begin{lemma}
\label{lem:fixup-corr1}
After execution of Algorithm~\ref{algo:fixup}, for any $(x, y) \in V$, the counts of the triples
in $P(x,y)$
and $P^{*}(x,y)$ represent the counts of LSPs and SPs from $x$ to $y$ in the updated graph.
Moreover,
the sets $L, L^{*}, R, R^{*}$ are correctly maintained.
\end{lemma}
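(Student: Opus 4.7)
The plan is to establish the lemma by induction on the order in which triples are extracted from $H_f$, leveraging the two invariants already proven (Invariant~\ref{inv:fixup1} and Invariant~\ref{inv:fixup2}) together with a companion argument for the extension sets. Throughout, I would use the post-condition of Lemma~\ref{lem:cleanup-corr2}: before fixup starts, $P$ and $P^{*}$ contain exactly those LSPs and SPs of the old graph that do not pass through $v$, with their correct counts, and $L,L^{*},R,R^{*}$ agree with these. Thus it suffices to show that fixup adds to these structures precisely the collection of \emph{new} LSPs and SPs (in the sense of Definition~\ref{def:newLSP}), each exactly once and with the correct multiplicity.

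I would first handle the shortest-path structures $P^{*}$, $L^{*}$, $R^{*}$. For each pair $(x,y)$, Invariant~\ref{inv:fixup1} gives that the weight of the first extracted set $S'$ equals the shortest-path distance in the updated graph; Invariant~\ref{inv:fixup2} then gives that the set $S$ assembled in Steps~\ref{fixup:main1}--\ref{fixup:phase3-main-check-end} of Algorithm~\ref{algo:fixup} is precisely the collection of new SP triples for $(x,y)$. Combined with what cleanup left behind, this yields the correct $P^{*}(x,y)$. The updates to $L^{*}(a',y)$ and $R^{*}(x,b')$ in Steps~\ref{fixup:phase3-add2LRStar2} and~\ref{fixup:phase3-add2LRStar1} add exactly the endpoint vertex whenever a triple $((xa',b'y),wt,\cdot)$ enters $P^{*}$, mirroring the deletions in cleanup, so these sets also match the STs in the updated graph.

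Next I would handle $P$, $L$, $R$. Any new LST $\tau=(x'x,by)$ in the updated graph has either its left prefix $(x,by)$ or its right suffix $(x'x,b)$ being a new ST, since $\tau$ is locally shortest, $\tau$ is new, and the only modification in the graph concerns edges at $v$. Processing extractions in increasing key order, induction shows that by the time a triple $\gamma$ representing the relevant prefix/suffix ST is extracted and added to $S$, it has already been placed in $P^{*}$; Steps~\ref{fixup:startleft}--\ref{fixup:endleft} then left-extend $\gamma$ over every $x'\in L^{*}(x,b)$ (and symmetrically for right extensions), generating each new LST $\tau$. Two bookkeeping points need verification: the dictionary \MTend{} ensures $\tau$ is generated only once across the left- and right-extension phases (a standard check mirroring the cleanup analysis in Lemma~\ref{lem:cleanup-corr2}), and the aggregation $fcount'=\sum_i ct_i$ and the field $paths(\gamma',v)\leftarrow \sum_\gamma paths(\gamma,v)$ correctly assemble the total count of new paths represented by $\tau$, matching the number by which cleanup had decremented $P(x',y)$ (or zero if $\tau$ did not previously exist). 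The additions to $L(x,by)$ and $R(x'x,b)$ happen exactly when a brand new triple is placed into $P(x',y)$, preserving the lockstep with $P$.

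The main obstacle I expect is the bookkeeping with $paths(\gamma,v)$ in the case that $P^{*}(x,y)$ is non-empty when the first set $S'$ for $(x,y)$ is extracted. Here the algorithm deliberately restricts itself to triples from $S'$ and, within those, to paths through $v$: the correctness hinges on showing that the quantity $paths(\gamma',v)$ propagated during left/right extensions faithfully tracks, for every descendant triple, the number of its represented paths that visit $v$. This is precisely the count that cleanup removed, and that fixup must restore, so establishing a propagation invariant of the form ``for every triple $\gamma$ pushed into $H_f$, $paths(\gamma,v)$ equals the number of paths of $\gamma$ through $v$ in the updated graph'' is the crux. Once that invariant is in place, combining it with Invariants~\ref{inv:fixup1} and~\ref{inv:fixup2} and the cleanup post-condition gives the full statement of the lemma.
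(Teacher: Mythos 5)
Your proposal is correct and follows essentially the same route as the paper: the paper also proves this lemma by an induction over the heap-extraction order, stating a three-part loop invariant (mirroring the cleanup invariants of Lemma~\ref{lem:cleanup-corr2}) that ties together $P$, $P^{*}$, $L$, $L^{*}$, $R$, $R^{*}$, \MTend{} and $H_f$, and deferring the maintenance argument to the analogous cleanup proof. The one point you make explicit that the paper leaves implicit is the propagation invariant for $paths(\gamma,v)$ in the non-empty $P^{*}(x,y)$ case; flagging that as the crux is a reasonable refinement, not a deviation.
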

\begin{proof}
We prove the lemma statement by showing the invariants are maintained by the while loop in Step~\ref{fixup:phase3-begin} of Algorithm~\ref{algo:fixup}. \\
{\noindent \bf Loop Invariant:} At the start of each iteration of the while loop in Step~\ref{fixup:phase3-begin} of Algorithm~\ref{algo:fixup}
let the min-key triple to be extracted and processed from $H_f$ have key = $[wt, x, y]$.
We claim the following about the \system and $H_f$.

\begin{enumerate}
\item [$\mathcal{I}_1$] \label{proof:fitem1} For any $a, b \in V$, if $G'$ contains $c_{ab}$ number of LSPs of weight ${wt}$ of the form $(xa, by)$.
Further, a triple $\gamma = ((xa, by), { wt}, c_{ab})$ is present in $P(x,y)$ (note that $H_f$ can also contain other triples from $x$ to $y$ with weight $wt$).

\item [$\mathcal{I}_2$]  Let $[\hat{wt}, \hat{x}, \hat{y}]$ be the last key extracted from $H_f$ and processed before $[wt,x,y]$. For any key $[wt_1, x_1, y_1] \leq [\hat{wt}, \hat{x}, \hat{y}]$, let
$G'$ contain ${ c}  > 0 $ number of SPs of weight  ${ wt_1}$ of the
form $(x_1a_1, b_1y_1)$. Further, let ${ c_{new}}$ (resp. ${ c_{old }}$) denote the number of such SPs
 that are {\em new} (resp. not {\em new}).
Here ${ c_{new} + c_{old} = c}$. Then,
\begin{enumerate}
\item  \label{proof:fitem2} 
the triple for $(x_1a_1, b_1y_1)$ with weight ${wt_1}$ in $P^{*}(x_1, y_1)$ represents $c$ SPs.
\item \label{proof:fitem3} $x_1 \in L(a_1, b_1y_1)$, $x_1 \in L^{*}(a_1, y_1)$,  and $y_1 \in R(x_1a_1, b_1)$, $y_1 \in R^{*}(x_1, b_1)$.
Further, $(x_1a_1, b_1y_1) \in $ \MT
iff ${ {c_{old}} > 0}$.
\item \label{proof:fitem4} If $c_{new} > 0$, for every $x' \in L(x_1, b_1y_1)$, a triple corresponding to $(x' x_1, b_1y_1)$
with weight $wt'~=~wt_1~+~\weight (x'x_1)$ and the appropriate count
is in $P(x_1,y_1)$ and in $H_f$
if $[wt', x', y_1] \geq [wt, x, y]$. A similar claim can be stated for every $y' \in R(x_1a_1, y_1)$.
\end{enumerate}
\item [$\mathcal{I}_3$] \label{proof:fitem5} For any key $[wt_2, x_2, y_2 ] \geq [wt, x, y]$, let
$G'$ contain $c > 0$ number of LSPs of weight  ${ wt_2}$ of the
form $(x_2a_2, b_2y_2)$. Further, let ${ c_{new}}$ (resp. ${ c_{old }}$) denote the number of such SPs
 that are {\em new} (resp. not {\em new}).
Here ${ c_{new} + c_{old} = c}$. Then the tuple $(x_2 a_2, b_2 y_2) \in$ \MTend, iff
$c_{old} > 0$ and $c_{new}$ paths have been added to $H_f$ by some earlier iteration of the while loop.
\end{enumerate}

The proof that these invariants hold at initialization and termination and are maintained at every iteration
of the while loop is similar to the proof of Lemma~\ref{lem:cleanup-corr2}.
\qed
\end{proof}

\subsubsection{Complexity of Fixup.} 
\label{sec:complexity}
As in DI, we observe that shortest paths and LSPs are removed only in cleanup and are added only in fixup.
In a call to fixup, accessing a triple takes $O(\log n)$ time
since it is accessed on a constant number of data structures. So, it suffices to bound the number of triples accessed in a call to fixup, and then multiply that bound by  $O(\log n)$.

We will establish an amortized bound.
The total number of LSTs at any time, 
including the end of the update sequence,
is $O(m^* \cdot \vstar)$ (by Lemma~\ref{lem:total-locally-shortest}).
Hence, if fixup accessed only {\em new} triples outside of the $O(n^2)$ triples added initially to $H_f$,  the 
amortized cost of fixup (for a 
sufficiently long update sequence) would 
be $O(\vstar^2 \cdot \log n)$, the cost of a cleanup. This is in fact the analysis in DI, where fixup satisfies this property. However,
in our algorithm fixup accesses several  triples that are already in the tuple system: In
Steps~\ref{fixup:phase3-addfromP-begin}--\ref{fixup:phase3-addfromP-end} we examine triples already in $P$, 
in Steps~\ref{fixup:phase3-addfromX-begin}--\ref{fixup:phase3-addfromX-end} we could increment the count of an
existing triple in $P^*$,  and in Steps~\ref{fixup:startleft}--\ref{fixup:endleft} we increment the count of an existing triple in $P$.
We bound the costs of these steps in Lemma~\ref{lem:fixup-time} below by classifying each triple $\gamma$ as one of the
following disjoint types:

\begin{itemize}
\item {\bf Type-0 (contains-v):} $\gamma$  represents  at least one path containing vertex $v$.
\item {\bf Type-1 (new-LST):} $\gamma$ was not an LST before the update but is an LST after the update,
and no path in $\gamma$ contains $v$.
\item {\bf Type-2 (new-ST-old-LST):} 
 $\gamma$ is an ST after the update,
and $\gamma$ was an LST but not an ST before the update, and  no path in $\gamma$ contains $v$.
\item {\bf Type-3 (new-ST-old-ST):} $\gamma$  was an ST  before the update and continues
to be an ST after the update,
and no path in $\gamma$ contains $v$.
\item {\bf Type-4 (new-LST-old-LST):} $\gamma$  was an LST  before the update and continues
to be an LST after the update,
and no path in $\gamma$ contains $v$.
\end{itemize}

The following lemma establishes an amortized bound for fixup which is the same as the worst case bound for cleanup.  

\begin{lemma}
\label{lem:fixup-time}
The fixup procedure takes time $O(\vstar^2 \cdot \log{n})$ amortized over a sequence of $\Omega(\maxM / \vstar)$ decremental-only updates.
\end{lemma}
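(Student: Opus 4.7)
My plan is to bound, for each fixup call, the number of distinct triple accesses and multiply by the $O(\log n)$ per-access cost. The claim will be that the number of accesses is $O(\vstar^2)$ per update, amortized over $\Sigma$. I will split the accessed triples according to the five-way classification (Types~0--4) given just before the statement, handling each separately. Two contributions are bounded directly and do not require amortization: the initialization of $H_f$ inserts one trivial triple per edge incident on $v$ plus one candidate min-weight triple per pair $(x,y)$, for a total of $O(n^2)$ triples, which fits the $O(\vstar^2)$ budget because $\vstar = \Omega(n)$; and Type-0 triples (those containing $v$) are already bounded by $O(\vstar^2)$ in one update by Lemma~\ref{lem:bound-tuples-thru-v}.

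The amortized bounds will rest on one master identity: over a decremental sequence of length $|\Sigma|$, the total number of LST \emph{creations} performed by fixups is at most the total number of LST \emph{removals} performed by cleanups plus the number of LSTs present in the initial graph. Cleanups only touch LSTs containing the updated vertex, so by Lemma~\ref{lem:bound-tuples-thru-v} total removals are $O(|\Sigma|\cdot\vstar^2)$; the initial LSTs are $O(m^*\cdot\vstar)$ by Lemma~\ref{lem:total-locally-shortest}, which is absorbed into $O(|\Sigma|\cdot\vstar^2)$ thanks to $|\Sigma|=\Omega(m^*/\vstar)$. This directly amortizes Type-1 accesses (the freshly-created LSTs) to $O(\vstar^2)$ per update. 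Type-2 accesses (promoting an existing LST from $P$ to $P^*$ when $P^*(x,y)$ is empty) will be charged using decremental monotonicity: $d(x,y)$ never decreases, so once an LST of weight $w$ stops being a shortest path it can never become one again, and hence each LST is promoted to $P^*$ at most once over its entire lifetime, capping total Type-2 accesses by the same $O(|\Sigma|\cdot\vstar^2)$. Type-3 accesses (incrementing the count of an existing ST in $P^*$) occur only in Steps~\ref{fixup:phase3-addfromX-begin}--\ref{fixup:phase3-addfromX-end}, one per Type-0 triple $\gamma'\in S'$, and are therefore bounded by the Type-0 count $O(\vstar^2)$ per update.

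The hard part is Type-4: an access that increments the count of an already existing LST $(x'x, by)$ while left- or right-extending a new ST $\gamma\in S$. Here the access neither creates nor destroys a triple, so there is no immediate handle for amortization against cleanup. My plan is to exploit the aggregation step that groups $\gamma$'s sharing a common signature $(x,b,y)$: after aggregation the number of extension attempts per signature is $|L^{*}(x,b)|$, and $|L^{*}(x,b)|\le\vstar$ because the edges counted by $L^{*}(x,b)$ all lie in the shortest-path dag rooted at $b$. I will then bound the total number of aggregated signatures arising from new STs by $O(\vstar^2)$ amortized, reusing the LST-creation identity applied to the subfamily of newly-created shortest-path tuples, and combine this with the per-signature bound. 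Summing Types~0--4 and multiplying by $O(\log n)$ will yield the claimed amortized bound of $O(\vstar^2\cdot \log n)$ per update.
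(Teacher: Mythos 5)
Your decomposition into Types 0--4 and your handling of Types 0--3 track the paper's argument closely (the paper charges Type-3 accesses to the $O(n^2)$ initial candidates rather than to the Type-0 count, and its amortization identity is ``creations $\le$ removals $+$ \emph{final} LST count'' rather than ``$+$ \emph{initial} LST count''---your version is not literally a valid inequality, though both quantities are $O(\maxM\cdot\vstar)$ by Lemma~\ref{lem:total-locally-shortest}, so the numerical conclusion is unaffected). The genuine problem is your Type-4 plan. You propose to bound the extension work by the number of aggregated signatures $(x,b,y)$ times the number of extensions per signature, with the first factor bounded by $O(\vstar^2)$ amortized and the second by $|L^{*}(x,b)|\le\vstar$. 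Multiplying these gives $O(\vstar^3)$ amortized, a factor $\vstar$ worse than the claim; and the first factor cannot be improved to $O(\vstar)$, since already the number of pairs $(x,y)$ admitting a new ST can be $\Theta(n^2)$. So as outlined, your argument does not close.

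The missing idea is not a counting refinement but a structural fact that lets one charge every extension access to a Type-0 or newly created triple---this is what the paper's terse sentence ``we never process an old LST which is not an ST'' is standing in for. Concretely: suppose a left extension $\gamma'=(x'x,by)$, generated from the set $S$ of new STs from $x$ to $y$, is merged into a triple for the tuple $(x'x,by)$ that already existed before this fixup. That pre-existing triple survived cleanup, so it represents an old LSP $x'\to x\leadsto b\to y$ avoiding $v$; its subpath from $x$ to $y$ is an old SP avoiding $v$ whose weight is unchanged by the update, hence $d(x,y)$ is unchanged and $P^{*}(x,y)$ is non-empty when $(x,y)$ is first extracted. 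In that branch the algorithm builds $S$ only from triples containing a path through $v$, so $\gamma'$ itself contains a path through $v$ and the merged triple is Type-0. Therefore every count-increment performed in the extension steps lands either on a Type-0 triple (at most $O(\vstar^2)$ per update by Lemma~\ref{lem:bound-tuples-thru-v}) or on a triple created earlier in the same fixup (already paid for by the creation/removal amortization). With this observation the signature counting and the $|L^{*}(x,b)|\le\vstar$ bound become unnecessary.
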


\begin{proof} 
We bound the number of triples examined; the time taken is $O(\log n)$ 
times the number of triples
examined due to the data structure operations performed on a triple. The initialization in Steps~\ref{fixup:init0}--\ref{fixup:init2} takes $O(n^2)$ time. We now consider the triples examined after Step~\ref{fixup:init2}.
The number of  Type-0 triples is $O(\vstar^2)$ by Lemma~\ref{lem:bound-tuples-thru-v}.
The number of Type-1 triples is addressed by amortizing over the entire update sequence as described in the paragraph below. For \mbox{Type-2} triples we observe that since updates only increase 
the weights on edges, a shortest path never reverts to being an LSP.
Further, each such Type-2 triple is examined only a constant number of times (in Steps~\ref{fixup:main1}--\ref{fixup:phase3-add2LRStar2}). Hence we charge each access to a Type-2 triple to the step in which it was created as a Type-1 triple. 
For Type-3 and Type-4, we note that for any $x, y$ we add exactly one candidate
min-weight triple from $P(x,y)$ to $H_f$,
hence initially there are at most $n^2$ such triples in $H_f$.
Moreover, we never process an old
LST which is not an ST so no additional Type-4 triples are examined during fixup. Finally,
triples in $P^*$ that are not 
placed initially in $H_f$ are not examined in any step of fixup,
so no additional Type-3 triples are examined. 
Thus the number of triples examined by a call to fixup is
$O(\vstar^2)$ plus $O(X)$, where $X$ is the number of {\em new} triples fixup adds to the tuple system.
(This includes an $O(1)$ credit placed on each new LST for a possible later conversion to an ST.)
  
Let $\sigma$ be the number of updates in the update sequence.  Since triples are removed only in cleanup, at most $O(\sigma \cdot \vstar^2)$ triples are removed by the cleanups. 
There can be at most $O(m^* \cdot \vstar)$ triples remaining at the end of the sequence (by Lemma 1), hence the total number of new triples added by all fixups in the update sequence is
$O(\sigma \cdot \vstar^2 + m^* \cdot \vstar)$. When $\sigma > m^*/\vstar$, the first term dominates, 
and this gives an average of $O( \vstar^2)$ triples added per fixup, and the desired amortized time bound for fixup.
\qed
\end{proof}

\subsection{Complexity of the Decremental Algorithm.}
Lemma~\ref{lem:fixup-time} establishes that the amortized cost per update of fixup is $O(\vstar^2 \cdot \log{n})$ 
when the decremental update sequence is of length $\Omega(\maxM / \vstar)$. Lemma~\ref{lem:cleanup-time}
shows that the worst case cost per update of cleanup is  $O(\vstar^2 \cdot \log{n})$. Since an update
operation consists of a call to cleanup followed by a call to fixup, this establishes Theorem~\ref{th:main}.

\section{Discussion}

We have presented an efficient decremental algorithm to maintain all-pairs all shortest paths (APASP).
The space used by our algorithm is $O(\maxM \cdot \vstar)$, 
the worst case  number of triples in our tuple system.
By using this
decremental APASP algorithm in place of the incremental APASP algorithm used in~\cite{NPR14},
we obtain a decremental algorithm 
with the same bound for maintaining BC scores.

Very recently, two of the authors have obtained a fully dynamic APASP algorithm~\cite{PR14}
that combines elements in the
fully dynamic APSP algorithms in ~\cite{DI04} and \cite{Thorup04}, while building on the results in the current paper. When specialized to unique shortest paths (i.e., APSP), this algorithm is about as simple as the one in~\cite{DI04}
and matches its amortized bound.

\bibliographystyle{abbrv}
\bibliography{references,refs2}
\newpage
\begin{appendix}

\section{Details from Section~\ref{sec:cleanup}}
\subsection{Accumulation}
\label{app:accum}
\noindent {\bf Need for accumulation.} In Step~\ref{process-cleanup:left-extend-start} of
Algorithm~\ref{algo:cleanup}, we consider every $b$ such that
$(x \times, by)$ belongs to $S$. We also assume that we have the
accumulated count of such triples
available in Step~\ref{process-cleanup:left-extend-fcount}.
An efficient method to accumulate these
counts is given below. We use this
accumulated count to generate a longer LST for each $x' \in L(x, by)$
(Step~\ref{process-cleanup:create-left}). (For the moment, ignore
the check of a tuple being present in \MTend.)
Consider our example in Fig.~\ref{fig:illust-example} where after the decremental update on $v$, we intend to remove
from the tuple system the following two paths passing through $v$ namely (i) $p_1 = \langle x' x, a_1, v, b, y \rangle$
and (ii) $p_2 = \langle x', x, a_2, v, b, y \rangle$. Note that both these paths represented by triples of the form  $(x'x, by)$.
We further remark that $p_1$ can be generated by left extending the triple $((xa_1, by), 4, 1)$
whereas $p_2$ can be generated by left extending the triple $((xa_2, by), 4, 1)$. 
However, instead of left-extending each triple individually, our algorithm
accumulates the count to obtain $2$ paths represented by the $r$-tuple $(x, by)$ and then generates the triple $((x'x, by), 5, 1)$. We note that
such
an implementation is correct because any valid left extension of triples of the form $(xa_1, by)$
is also a valid left extension of triples of the form $(xa_2, by)$ when the triples have the same weight.
Furthermore, it is efficient since it generates the triple of the form $(x'x, by)$ at most once.
This is the precise reason for defining the set $L$ for an $r$-tuple $(x, by)$ instead of defining
it for the tuple $(xa_1, by)$.\\

\noindent{\bf Accumulation technique.}
An efficient implementation of getting accumulated counts can be achieved in several ways. For the sake of concreteness, we
sketch an implementation by maintaining two arrays $A$ and $B$ of size $n$ each and two linked
lists $L_a$ and $L_b$. Assume that the
arrays are initialized to zero and the linked lists are empty just before any triple with key $[wt, x, y]$  is extracted from the heap. When a triple
$\gamma = ((xa_i, b_jy), wt, count_{ij})$ is extracted from $H_c$, we add $count_{ij}$ to
$A[a_i]$ and $B[b_j]$. The lists $L_a$ and $L_b$ maintain pointers to non-zero locations in the arrays $A$ and $B$ respectively.
Thus, when all triples of weight $wt$ corresponding to tuples of the form $(x\times, \times y)$ are
extracted from $H_c$, the value in $A[a_i]$ denotes the  number of locally shortest
paths of the form $(xa_i, \times y)$ to be updated. Similarly, the value in $B[b_j]$ denotes
 number of locally shortest
paths of the form $(x\times, b_j y)$ to be updated. Using the lists $L_a$ and $L_b$,
we can efficiently access the accumulated counts as well as reinitialize (to zero) all the non-zero values in the two arrays $A$ and $B$.

\subsection{Need for Marked-Tuples}
\label{app:marked}
Consider the example in Fig.~\ref{fig:illust-example}
and assume that we have deleted the two paths of the form $(x'x, by)$ which pass through $v$.
Furthermore, assume that we have generated them via left extending the two triples of the form
$(xa_1, by)$ and $(xa_2, by)$. Now note that since path $\langle x', x, a_2, v_2, b, y \rangle$
continues to exist in the tuple system, $x' \in L(x, by)$ and $y \in R(x'x, b)$. Thus, when we consider the triples of the form $(x'x, b)$
for right extension, it is possible to generate the same paths again. To avoid such a
double generation we use the dictionary \MTend. In Step~\ref{process-cleanup:left-extend-unmarked} of Algorithm~\ref{algo:cleanup},
just before we create a left extension of a set of triples of the form $(x \times, by)$
using the vertex $x' \in L(x, by)$, we check whether $(x'x, by)$ is present in \MTend.
Recall that, \MT is empty when the cleanup begins.
When  a triple for  $(x'x, by)$  is generated for the first time (either by a left extension
or right extension),
and there are additional locally shortest paths in $G$ of the form $(x'x, by)$ which do not pass through $v$,
we insert a tuple $(x'x, by)$ in \MT (Step~\ref{process-cleanup:left-extend-markR}, Algorithm~\ref{algo:cleanup}).
Thus the data structure \MT and the checks in Step~\ref{process-cleanup:left-extend-unmarked} of Algorithm~\ref{algo:cleanup}
ensure that the paths are generated exactly once either as a left extension or as a right extension but not by both.
Note that such a marking is not required when
there are no additional paths in $G$ which do not pass through $v$. In that case, we immediately delete
$x'$ from $L(x, by)$ and $y$ from $R(x'x, b)$ (Step~\ref{process-cleanup:left-extend-removeL}, Algorithm~\ref{algo:cleanup})
ensuring that a triple for $(x'x, by)$ gets generated exactly once.
This is the only case that can occur in
DI~\cite{DI04} due to the assumption of unique shortest paths,
 and therefore this book-keeping with \MT
is not required in \cite{DI04}.

\subsection{Analysis of Cleanup}
\label{app:corr1}
\begin{appendix-lemma}
{\ref{lem:cleanup-corr2}}
After Algorithm~\ref{algo:cleanup}
is executed, the
counts of triples in $P$ ($P^{*}$) represent counts of LSPs (SPs)  in $G$
that do not pass through $v$. Moreover, the sets $L, L^{*}, R, R^{*}$ are correctly maintained.
\end{appendix-lemma}

\begin{proof}
To prove the lemma statement we show that the while loop in Step~\ref{cleanup:while}
of Algorithm~\ref{algo:cleanup} maintains the following invariants.\\
{\bf Loop Invariant:} At the start of each iteration of the while loop in Step~\ref{cleanup:while}
of Algorithm~\ref{algo:cleanup}, assume that the min-key triple to be extracted and processed from $H_c$ has key $[wt, x, y]$. Then
the following properties hold about the tuple system and $H_c$.
We assert the invariants about the sets $P$, $L$, and $R$. Similar arguments can be used to establish the
correctness of the sets $P^{*}$, $L^{*}$, and $R^{*}$.

\begin{enumerate}
\item [$\mathcal{I}_1$]\label{proof:item1} For any $a, b \in V$, if $G$ contains $c_{ab}$ number of locally shortest paths of weight ${wt}$ of the form $(xa, by)$
passing through $v$,
then $H_c$ contains a triple $\gamma = ((xa, by), { wt}, c_{ab})$. Further, $c_{ab}$ has been
decremented from the initial count in the triple for $(xa, by)$  in $P(x,y)$.
\item [$\mathcal{I}_2$] Let $[\hat{wt}, \hat{x}, \hat{y}]$ be the key extracted from $H_c$ and processed in the previous iteration.
For any key $[wt_1, x_1, y_1] \leq [\hat{wt}, \hat{x}, \hat{y}]$, let
$G$ contain ${ c}  > 0 $ number of LSPs of weight  ${ wt_1}$ of the
form $(x_1a_1, b_1y_1)$. Further, let ${ c_v}$ (resp. ${ c_{\bar v}}$) denote the number of such LSPs
 that pass through $v$ (resp. do not pass through $v$).
Here ${ c_v + c_{\bar v} = c}$. Then,
\begin{enumerate}
\item \label{proof:item2} if $c>c_v$ there is a triple in $P(x_1,y_1)$ of the form $(x_1a_1, b_1y_1)$  and weight $wt_1$ representing $c-c_v$ LSPs. If $c=c_v$ there is no such triple in $P(x_1,y_1)$.
\item \label{proof:item3} $x_1 \in L(a_1, b_1y_1)$,  $y_1 \in R(x_1a_1, b_1)$, and    $(x_1a_1, b_1y_1) \in $ \MT
iff ${ {c_{\bar v}} > 0}$.
\item \label{proof:item4} For every $x' \in L(x_1, b_1y_1)$, a triple corresponding to $(x' x_1, b_1y_1)$
with weight $wt'~=~wt_1~+~\weight (x',x_1)$ and the appropriate count is 
in $H_c$ if $[wt', x', y_1] \geq [wt, x, y]$. A similar claim can be stated for every $y' \in R(x_1a_1, y_1)$.
\end{enumerate}
\item [$\mathcal{I}_3$] \label{proof:item5} For any key $[wt_2, x_2, y_2 ] \geq [wt, x, y]$, let
$G$ contain $c > 0$ LSPs of weight  ${ wt_2}$ of the
form $(x_2a_2, b_2y_2)$.
Further, let ${ c_v}$ (resp. ${ c_{\bar v}}$) denote the number of such LSPs
 that pass through $v$ (resp. do not pass through $v$).
Here ${ c_v + c_{\bar v} = c}$.
Then the tuple $(x_2 a_2, b_2 y_2) \in$ \MTend, iff $c_{\bar v} > 0$ and a triple for
$(x_2a_2, b_2 y_2)$ representing $c_v$ LSPs is present in $H_c$.
\end{enumerate}

\noindent {\bf Initialization:}
We show that the invariants hold at the start of the first
iteration of the while loop in Step~\ref{cleanup:while} of Algorithm~\ref{algo:cleanup}.
The min-key triple in $H_c$ has key $[0, v, v]$. Invariant~{$\mathcal{I}_1$}
holds since  we inserted into
$H_c$ the trivial triple of weight $0$ corresponding to the vertex $v$
and that is the only triple of such key. Moreover, since we do not represent trivial
paths containing the single vertex, no counts need to be decremented.
Since we assume positive edge weights, there are no LSPs
in $G$ of weight less than zero. Thus, invariants $\mathcal{I}_2$(a), $\mathcal{I}_2$(b), and $\mathcal{I}_2$(c)
hold trivially.
Invariant~$\mathcal{I}_3$ holds since $H_c$ does not contain any triple of weight $> 0$ and we initialized \MT to empty.

\noindent {\bf Maintenance:} Assume that the invariants are true at the beginning of the $1 \le i \le k$-th iteration
of the while loop.
We now prove that the claims are true at the beginning of the $(k+1)$-th iteration.
Let the min-key triple at the beginning of the $k$-th iteration be $[wt_k, x_k, y_k]$.
By invariant~$\mathcal{I}_1$, we know that for any $a_i, b_j$, if there exists LSPs in $G$ of the form $(x_k a_i, b_j y_k)$
of weight $wt_k$, they have been inserted into $H_c$ and further their counts have been
decremented from appropriate triples in $P(x_k,y_k)$.
Now consider the set of triples with key $[wt_k, x_k, y_k]$
which we extract in the set $S$ (Step~\ref{cleanup:extract-set}, Algorithm~\ref{algo:cleanup}).
We consider left-extensions of triples in $S$; symmetric arguments apply for right-extensions.
Consider for a particular~$b$, the set of triples $S_{b} \subseteq S$
and let $fcount'$ denote the sum of the counts of the paths represented by triples in $S_b$.
Let $x' \in L(x_k, by_k)$; our goal is to generate the paths $(x'x_k, b y_k)$ with
count = $fcount'$ and weight $wt' = wt_k+\weight(x',x_k)$.
 However, we generate such paths
only if they have not been generated by a right-extension of another set of paths.
We note that the paths
of the form $(x'x_k, by_k)$ can be generated by right extending the set of
triples of the form $(x'x_k, \times b)$. Without loss of generality assume that the triples of
the form $(x'x_k, \times b)$ have a key which is greater than the key $[wt_k, x_k, y_k]$ and they are not in $H_c$. Thus,
at the beginning of the $k$-th iteration, by invariant~$\mathcal{I}_3$, we know that $(x'x_k, by_k) \notin$ \MTend.
Steps~\ref{process-cleanup:create-left}--\ref{process-cleanup:left-extend-removeP},
Algorithm~\ref{algo:cleanup} create a triple of the form $(x'x_k, by_k)$ of weight $wt'$ and decrement
$fcount'$ many paths from the appropriate triple in $P(x',y_k)$ and add it to $H_c$.
This establishes invariants~$\mathcal{I}_2$(a) and $\mathcal{I}_2$(c) at the beginning of the $(k+1)$-th iteration.
In addition, if there are no
LSPs in $G$ of the form $(x'x_k, by_k)$ which do not pass through $v$,
we delete $x'$ from $L(x_k, by_k)$ and delete $y_k$ from $R(x'x_k, b)$ (Step~\ref{process-cleanup:left-extend-removeL}, Algorithm~\ref{algo:cleanup}).
On the other hand, if there exist LSPs in $G$ of the form $(x'x_k, by_k)$,
then $x'$ (resp. $y_k$) continues to exist in $L(x_k, by_k)$ (resp. in $R(x'x, b)$). Further,
we add the tuple $(x'x_k, by_k)$ to \MT and note that the corresponding triple is already present in $H_c$ (Step~\ref{process-cleanup:left-extend-markR}, Algorithm~\ref{algo:cleanup}).
Since the invariants
$\mathcal{I}_2$(b)  and $\mathcal{I}_2$(c) were true for every key  $ < [wt_i, x_i, y_i]$
and by the above steps we ensure that these invariants hold for every key = $[wt_i, x_i, y_i]$.
Thus,  invariant
$\mathcal{I}_2$(b)  
is true at the beginning of the $(k+1)$-th iteration.
Note that any triple that is generated by a left extension (or symmetrically
right extension) is 
inserted into $H_c$ as well as into \MTend. This establishes invariant~$\mathcal{I}_3$ at the beginning of the $(k+1)$-th iteration.

Finally, to see that invariant~$\mathcal{I}_1$ holds at the beginning of the $(k+1)$-th iteration, let the
min-key at the $(k+1)$-th iteration be $[wt_{k+1}, x_{k+1}, y_{k+1}]$. Note that triples
with weight $wt_{k+1}$ starting with $x_{k+1}$ and ending in $y_{k+1}$ can be created
either by left extending or right extending the triples of smaller weight. And since for each of
iteration $\le k$ invariant $\mathcal{I}_2$(c) holds, we conclude that invariant~$\mathcal{I}_1$ holds at the beginning of the $(k+1)$-th iteration.

\noindent {\bf Termination:} The exit condition of the while loop is when the heap $H_c$ is empty. Because Invariant $\mathcal{I}_1$ maintains in $H_c$ the first triple to be extracted and processed, then $H_c=\emptyset$ implies that there
are no more triples containing the vertex $v$ that need to be left or right extended and removed from the tuple system. 
Moreover, since the invariants hold for the last set of triples of weight $\hat{wt}$ extracted from the heap, by $\mathcal{I}_2$(a),
all LSPs having weight less than or equal to $\hat{wt}$ have been decremented from the appropriate sets $P(\cdot)$. Finally, 
due to $\mathcal{I}_2$(b), the sets $L$ and $R$ are also correctly maintained after the while loop terminates.
\qed
\end{proof}

\noindent
In an analogous way we can prove the correctness of the loop invariant for fixup given in the proof of Lemma~\ref{lem:fixup-corr1}.
\end{appendix}

\end{document}